\documentclass[12pt,a4paper]{article}
\usepackage{amsmath, amsthm, amssymb, graphicx, natbib, rotating, setspace, authblk, comment, ulem, color, lscape, listings, float, booktabs}
\usepackage[a4paper, left=1.5cm, right=1.5cm, top=1.5cm, bottom=2cm]{geometry}

\usepackage{caption}
\usepackage{subcaption}

\usepackage{pgfplots}
\pgfplotsset{compat=1.16}

\usepackage{tabularx} 
\usepackage{hyperref}
\usepackage{pdflscape}
\usepackage{multirow}
\usepackage{threeparttable}

\usepackage{enumitem}

\usepackage{titlesec}
\titlespacing*{\section}{0pt}{10pt}{5pt}  
\titlespacing*{\subsection}{0pt}{10pt}{5pt}  
\titlespacing*{\subsubsection}{0pt}{10pt}{5pt}  

\DeclareMathOperator{\E}{\mathbb{E}}

\DeclareMathOperator{\V}{\mathbb{V}}
\DeclareMathOperator{\Var}{\mathbb{V}}

\newtheorem{theorem}{Theorem}
\newtheorem{proposition}{Proposition}
\newtheorem{lemma}{Lemma}
\newtheorem{remark}{Remark}

\usepackage{amsmath, amssymb}
\DeclareMathOperator*{\argmin}{arg\,min}

\title{Partition-Based Functional Ridge Regression for High-Dimensional Data}
\author[1]{Shaista Ashraf}
\author[1]{Ismail Shah}
\author[2]{Farrukh Javed}
\affil[1]{Department of Statistics, Quaid-i-Azam University, Pakistan}
\affil[2]{Department of Statistics, Lund University, Sweden}

\begin{document}
	\date{\today}
	\maketitle
	
\begin{abstract}
This paper proposes a partition-based functional ridge regression framework to address multicollinearity, overfitting, and interpretability in high-dimensional functional linear models. The coefficient function vector \( \boldsymbol{\beta}(s) \) is decomposed into two components, \( \boldsymbol{\beta}_1(s) \) and \( \boldsymbol{\beta}_2(s) \), representing dominant and weaker functional effects. This partition enables differential ridge penalization across functional blocks, so that important signals are preserved while less informative components are more strongly shrunk. The resulting approach improves numerical stability and enhances interpretability without relying on explicit variable selection.
We develop three estimators: the Functional Ridge Estimator (FRE), the Functional Ridge Full Model (FRFM), and the Functional Ridge Sub-Model (FRSM). Under standard regularity conditions, we establish consistency and asymptotic normality for all estimators. Simulation results reveal a clear bias--variance trade-off where FRSM performs best in small samples through strong variance reduction, whereas FRFM achieves superior accuracy in moderate to large samples by retaining informative functional structure through adaptive penalization. An empirical application to Canadian weather data further demonstrates improved predictive performance, reduced variance inflation, and clearer identification of influential functional effects. Overall, partition-based ridge regularization provides a practical and theoretically grounded method for high-dimensional functional regression.
\end{abstract}

\section{Introduction}

Simple linear regression has long served as a fundamental tool in statistical modeling, providing a straightforward means of understanding relationships between scalar variables. However, as datasets have grown in complexity, the need for more flexible modeling techniques has become increasingly apparent. Functional linear regression extends the classical framework by allowing analysts to model smooth and dynamic relationships that arise in functional data. This paradigm shift moves beyond scalar-to-scalar associations and facilitates the analysis of data structures where predictors or responses evolve continuously over a domain. Functional linear regression typically appears in three forms: (i) scalar-on-function, (ii) function-on-scalar, and (iii) function-on-function. In this study, we focus on the scalar-on-function case, which provides a powerful and interpretable framework for capturing complex dependencies in high-dimensional applications where functional covariates are numerous and potentially correlated.

In the scalar-on-function setting, the functional linear regression model (FLRM) relates a scalar response $y_i$ to $L$ functional covariates $\{z_{ij}(s)\}_{j=1}^L$ observed over a continuous domain $\mathcal{T} \subset \mathbb{R}$:
\begin{equation}
	y_i = \alpha + \sum_{j=1}^{L} \int_{\mathcal{T}} z_{ij}(s)\,\beta_j(s)\, ds + \epsilon_i,
	\qquad i = 1,\ldots,n,
	\label{FLRM}
\end{equation}
where $\beta_j(s) \in L^2(\mathcal{T})$ are unknown coefficient functions capturing the effect of each functional predictor, and $\epsilon_i \sim N(0,\sigma^2)$ represents the random error. The key challenge in estimating $\beta_j(s)$ lies in handling their infinite-dimensional nature while ensuring stability and interpretability. To address this, we approximate each coefficient function by a spline basis expansion $\beta_j(s) = \boldsymbol{\psi}(s)^\top \mathbf{b}_j$, where $\boldsymbol{\psi}(s)$ is a vector of $K_z$ pre-specified basis functions. This transformation maps the functional inner product into a finite-dimensional space, yielding the standard representation $\mathbf{y} = \mathbf{Z}\mathbf{b} + \boldsymbol{\epsilon}$. Here, the design matrix $\mathbf{Z} \in \mathbb{R}^{n \times (LK_z)}$ reflects the discretized and smoothed functional covariates projected onto the basis space, while the underlying geometry is preserved through the Gram matrix $\mathbf{G} = \int \boldsymbol{\psi}(s)\boldsymbol{\psi}(s)^\top ds$.

Classical methods in Functional Data Analysis (FDA), notably the foundational work of \citet{ramsay_silverman_2002} and subsequent developments in penalized spline regression \citep{eilers1996, ramsay2005}, address the ill-posedness of this problem by imposing smoothness penalties. While functional ridge regression \citep{cardot2003spline} offers a stable alternative by controlling multicollinearity through global shrinkage, it typically applies a uniform penalty to all coefficient directions. Current approaches \citep{reiss2007, goldsmith2011, ferraty2018, wu2019, zhang2023} often fail to distinguish between highly relevant and less informative functional covariates. This limitation is particularly acute in high-dimensional settings, where irrelevant predictors may overshadow the relevant functional structure and overwhelm the regularization mechanism.

To overcome these limitations, we develop a partition-based functional ridge regression framework. Let \(\boldsymbol{\beta}(s) = (\beta_1(s),\ldots,\beta_p(s))^\top\) be the vector of coefficient functions. We decompose \(\boldsymbol{\beta}(s)\) into relevant components \(\boldsymbol{\beta}_1(s)\) and nuisance components \(\boldsymbol{\beta}_2(s)\), i.e. \(\boldsymbol{\beta}(s) = (\boldsymbol{\beta}_1(s)^\top, \boldsymbol{\beta}_2(s)^\top)^\top\). This decomposition, inspired by the partitioned estimation strategy presented in such as, \citep{yuzbasi2017improved}, enables selective penalization in the functional setting. By assigning distinct ridge parameters \((\lambda_1, \lambda_2, \lambda_3)\) to different blocks of the model, our approach accommodates heterogeneous relevance levels among predictors and provides flexibility for shrinking weak or redundant functional effects while preserving the influence of structurally important components. This selective shrinkage stabilizes estimation, improves interpretability, and mitigates multicollinearity more effectively than uniform ridge penalties.

One of the main contributions of this work is the development of a unified asymptotic framework for partitioned ridge estimators in functional linear models. Under a functional regime where the sample size, the number of observation points, and the spline basis dimension grow jointly, we establish convergence rates for the proposed estimators. For inference, we operate within the same growing‑\(K_z\) regime and impose transparent undersmoothing conditions that render both the spline approximation error and the penalty‑induced bias asymptotically negligible. This yields a central limit theorem for linear functionals \(\langle \widehat{\beta}_n, x \rangle\), for which the asymptotic variance can be consistently estimated. The framework thus provides a coherent theoretical foundation for both estimation and inference without requiring the basis dimension to be fixed or the true coefficient functions to lie in the spline space.

The remainder of the paper is organized as follows. Section~2 introduces the three model specifications and their corresponding partitioned ridge estimators. Section~3 establishes the theoretical properties of the estimators, including consistency and asymptotic results under a functional regime where the sample size, observation points, and spline basis dimension grow jointly. Section~4 discusses computational details, including regularization parameter selection, spline construction, and partition determination. Section~5 presents extensive Monte Carlo simulations, and Section~6 illustrates the methods through an application to Canadian weather data. Section~7 concludes. Proofs and additional technical details are provided in the Appendix.

\section{Functional Ridge Model Specification and Methodology}

Functional data analysis (FDA) provides a principled framework for modeling relationships in which covariates vary continuously over a domain. A widely used setting is the functional linear regression model, which relates a scalar response to one or more functional predictors \citep{ramsay2005}. Despite its flexibility, these model often suffer from practical challenges such as severe multicollinearity, overfitting, and numerical instability arising from the high dimensionality induced by functional representations. These issues motivate the use of structured regularization and smoothness constraints.

Consider the scalar‑on‑function functional linear model with \(p\) functional predictors:
\[
y_i = \alpha + \sum_{j=1}^{p} \int_{\mathcal T} z_{ij}(s)\,\beta_j(s)\,ds + \epsilon_i,\qquad i=1,\ldots,n.
\]
The target of inference is the set of coefficient functions \(\{\beta_j(\cdot)\}_{j=1}^p\), each an element of \(L^2(\mathcal T)\).  
To make the problem tractable, we represent both the predictors and the coefficient functions in a common \(K_z\)-dimensional spline basis \(\{\psi_k\}_{k=1}^{K_z}\):
\[
z_{ij}(s) \approx \sum_{k=1}^{K_z} c_{ijk}\psi_k(s),\qquad 
\beta_j(s) \approx \sum_{k=1}^{K_z} b_{jk}\psi_k(s).
\]
The basis coefficients \(c_{ijk}\) are obtained by pre‑smoothing the discretely observed trajectories; this step is asymptotically negligible under mild conditions (cf. \citet{cardot2003spline}).  
Collecting the coefficients into vectors \(\mathbf{c}_{ij} = (c_{ij1},\ldots,c_{ijK_z})^\top\) and \(\mathbf{b}_j = (b_{j1},\ldots,b_{jK_z})^\top\), the inner product \(\langle z_{ij}, \beta_j\rangle\) is approximated by \(\mathbf{c}_{ij}^\top \mathbf{G} \mathbf{b}_j\), where \(\mathbf{G} = \int \boldsymbol{\psi}\boldsymbol{\psi}^\top\) is the Gram matrix.  
If the basis is orthonormal in \(L^2\), \(\mathbf{G} = \mathbf{I}\); otherwise we can absorb \(\mathbf{G}^{1/2}\) into the basis or work directly with the transformed coefficients.  
For notational convenience we assume an orthonormal basis, so that \(\langle z_{ij}, \beta_j\rangle \approx \mathbf{c}_{ij}^\top \mathbf{b}_j\).  
Stacking the data over subjects and predictors yields the matrix formulation \(\mathbf{y} = \mathbf{Z}\mathbf{b} + \boldsymbol{\varepsilon}\), where  
\(\mathbf{Z} = [\mathbf{Z}_1,\ldots,\mathbf{Z}_p]\) with \(\mathbf{Z}_j \in \mathbb{R}^{n\times K_z}\) having rows \(\mathbf{c}_{ij}^\top\), and  
\(\mathbf{b} = (\mathbf{b}_1^\top,\ldots,\mathbf{b}_p^\top)^\top \in \mathbb{R}^{pK_z}\).  
This representation is useful for computational ease, while the primary objects of interest remain the functions \(\beta_j(s) = \boldsymbol{\psi}(s)^\top \mathbf{b}_j\).

To estimate the coefficient functions under a smoothness constraint, a ridge‑type roughness penalty is considered.
The \textit{Functional Ridge Estimator} (FRE) is defined as the minimizer of the penalized sum of squares:
\[
\widehat{\beta}_1,\ldots,\widehat{\beta}_p = \argmin_{\beta_1,\ldots,\beta_p\in\mathcal{S}_{qk}} \sum_{i=1}^n \Bigl(y_i - \sum_{j=1}^p \langle z_{ij}, \beta_j\rangle\Bigr)^2 \;+\; \lambda_1 \sum_{j=1}^p \int_{\mathcal T} (\beta_j^{(m)}(s))^2 ds,
\]
where \(\mathcal{S}_{qk}\) is the spline space spanned by \(\{\psi_k\}\).  
Under the basis expansion, \(\beta_j(s) = \boldsymbol{\psi}(s)^\top \mathbf{b}_j\) and the penalty becomes \(\mathbf{b}_j^\top \mathbf{R}_0 \mathbf{b}_j\) with \(\mathbf{R}_0\) the Gram matrix of the \(m\)-th derivatives.  
Stacking the penalty contributions gives \(\mathbf{b}^\top \mathbf{R}\mathbf{b}\) where \(\mathbf{R} = \operatorname{diag}(\mathbf{R}_0,\ldots,\mathbf{R}_0)\) of size \(pK_z\).  
The minimizer in terms of the basis coefficients is therefore
\[
\widehat{\mathbf{b}}_{\mathrm{FRE}} = \argmin_{\mathbf{b}\in\mathbb{R}^{pK_z}} \bigl\{ \|\mathbf{y} - \mathbf{Z}\mathbf{b}\|^2 + \lambda_1\,\mathbf{b}^\top \mathbf{R}\mathbf{b} \bigr\},
\tag{2.1}
\]
with closed‑form solution \(\widehat{\mathbf{b}}_{\mathrm{FRE}} = (\mathbf{Z}^\top\mathbf{Z} + \lambda_1\mathbf{R})^{-1}\mathbf{Z}^\top\mathbf{y}\).  
The estimated coefficient functions are then recovered as \(\widehat{\beta}_j^{\mathrm{FRE}}(s) = \boldsymbol{\psi}(s)^\top \widehat{\mathbf{b}}_j\).  
Note that the FRE  resembles the penalized B‑spline estimator of \citet{cardot2003spline}.

The main methodological contribution is to extend FRE by distinguishing between \textit{relevant} and \textit{nuisance} functional predictors through differential penalization.  
Let \(p = p_1 + p_2\) and partition the coefficient vector as \(\mathbf{b}=(\mathbf{b}_1^\top,\mathbf{b}_2^\top)^\top\) with \(\mathbf{b}_1\in\mathbb{R}^{p_1K_z}\), \(\mathbf{b}_2\in\mathbb{R}^{p_2K_z}\), and the design matrix as \(\mathbf{Z}=[\mathbf{Z}_1\mid\mathbf{Z}_2]\) where \(\mathbf{Z}_1\in\mathbb{R}^{n\times p_1K_z}\), \(\mathbf{Z}_2\in\mathbb{R}^{n\times p_2K_z}\) correspond to the two groups.  
The \textit{Functional Ridge Full Model} (FRFM) applies separate ridge penalties to each block of coefficient functions:
\begin{equation}
\widehat{\mathbf{b}}_{\mathrm{FRFM}} = \argmin_{\mathbf{b}} 
\Bigl\{ \|\mathbf{y} - \mathbf{Z}\mathbf{b}\|^2 + 
\mathbf{b}^\top \!\begin{pmatrix} \lambda_1\mathbf{R}_1 & \mathbf{0} \\ \mathbf{0} & \lambda_2\mathbf{R}_2 \end{pmatrix} \mathbf{b} \Bigr\},
\tag{2.2}
\end{equation}
where \(\mathbf{R}_1,\mathbf{R}_2\) are block‑diagonal penalty matrices of sizes \(p_1K_z\) and \(p_2K_z\) respectively, and \(\lambda_2 \ge \lambda_1\) enforces stronger shrinkage of the nuisance coefficient functions \(\beta_j\) with \(j\) in the second group.  
The \textit{Functional Ridge Sub‑Model} (FRSM) retains only the relevant predictors:
\begin{equation}
\widehat{\mathbf{b}}_{\mathrm{FRSM}} = \argmin_{\mathbf{b}_1} 
\bigl\{ \|\mathbf{y} - \mathbf{Z}_1\mathbf{b}_1\|^2 + \lambda_3\,\mathbf{b}_1^\top \mathbf{R}_1 \mathbf{b}_1 \bigr\},
\tag{2.3}
\end{equation}
which is equivalent to setting \(\mathbf{b}_2=\mathbf{0}\) and \(\lambda_2\to\infty\) in (2.2).  
For both FRFM and FRSM, the estimated functions are again \(\widehat{\beta}_j(s) = \boldsymbol{\psi}(s)^\top \widehat{\mathbf{b}}_j\).  

All three estimators share the unified functional form \(\widehat{\beta}_j(s) = \boldsymbol{\psi}(s)^\top \widehat{\mathbf{b}}_j\) with \(\widehat{\mathbf{b}}\) solving a penalized least‑squares problem; they differ only in the composition of the predictor set and the structure of the penalty.  
This unified perspective emphasizes that the target of inference remains the coefficient functions themselves, while the matrix expressions are a convenient computational wrapper.

The penalty matrices \(\mathbf{R},\mathbf{R}_1,\mathbf{R}_2\) are positive semidefinite. When added to \(\mathbf{Z}^\top\mathbf{Z}\), they increase its small eigenvalues, ensuring that \(\mathbf{Z}^\top\mathbf{Z}+\mathbf{P}\) is positive definite even under  multicollinearity among the functional predictors.
All three estimators are therefore numerically stable and have finite variance.  
The partition into relevant and nuisance components is implemented entirely through the differential shrinkage parameters \((\lambda_1,\lambda_2,\lambda_3)\); no discrete variable selection is performed, preserving model continuity and avoiding instability from hard thresholding.  

Collectively, FRE, FRFM, and FRSM form a unified penalized functional ridge regression framework.  
All three are derived from the same underlying functional linear model (Equation~\ref{FLRM}) and target the same objects, the coefficient functions \(\beta_j(\cdot)\), while offering increasing levels of parsimony and regularization.  
This enables systematic control of multicollinearity, improved interpretability, and principled bias–variance trade‑offs in functional data analysis. When prior knowledge about the partition is unavailable, one may employ data‑driven strategies such as adaptive ridge penalties (see e.g. \citet{frommlet2016}) to identify relevant and nuisance predictors.

\section{Theoretical Properties of Regularized Functional Ridge Estimators}
The functional ridge estimators introduced in the previous section (FRE, FRFM, and FRSM) extend classical ridge regression to the functional setting through spline-based representations of both the covariates and the coefficient function. Unlike fixed-dimensional ridge regression \citep{hoerl1970ridge}, the present framework requires theoretical guarantees under an increasing-resolution regime, in which the functional observations become more densely sampled and the
basis dimension grows accordingly. Foundational results on penalized spline estimation in functional linear models are provided by \citet{cardot2003spline}, \citet{ramsay2005}, and \citet{HallHorowitz2007}. Building on this literature, our contribution incorporates a relevance--nuisance partitioning structure with differential penalization, while remaining fully compatible with standard functional asymptotic regimes. In particular, the estimators are designed to recover the entire coefficient function $\beta(\cdot)$ as functional resolution increases, rather than targeting a fixed finite-dimensional parameter vector. We now formalize this framework by stating the regularity conditions under which the bias, variance, consistency, and asymptotic normality of the proposed estimators are established.

\subsection{Regularity conditions}
\label{sec:conditions}
We impose the following assumptions for the asymptotic analysis. All norms \(\|\cdot\|\) without subscript denote the \(L^2(\mathcal T)\) norm.

\subsection{Regularity conditions}
\label{sec:conditions}

We impose the following assumptions for the asymptotic analysis.  
All norms \(\|\cdot\|\) without subscript denote the \(L^2(\mathcal T)\) norm, and \(\|\cdot\|_2\) denotes the prediction norm induced by the covariance operator \(\Gamma\) (see (A6)).

\begin{enumerate}[label=(A\arabic*), leftmargin=*, ref=(A\arabic*)]

\item \textbf{Smoothness of coefficient functions.}  
Each coefficient function \(\beta_j\) belongs to a Hölder class \(\mathcal{C}^s\) on \([0,1]\) with \(s = s' + \nu\) (\(s'\in\mathbb{N}\), \(\nu\in[0,1]\)).  
The spline degree satisfies \(q\ge s\).  
This implies \(\beta_j \in H^{s}(\mathcal T)\) and ensures compatibility with the roughness penalty.

\item \textbf{Penalty order.}  
The roughness penalty is based on the \(m\)-th derivative of the coefficient function (equivalently, an \(m\)-th order difference penalty in the discrete implementation).  
We assume \(m \le s\); this ensures that the penalty does not oversmooth and that the induced bias is of order \(O(\lambda^2 K_z^{2(m-s)})\).  
Moreover, for the B‑spline basis and penalty matrix \(\mathbf{R}\) constructed from the \(m\)-th derivative, the eigenvalues of \(\mathbf{R}\) satisfy  
\[
\nu_{\max}(\mathbf{R}) = O(K_z^{2m-1}), \qquad
\nu_{\min}(\mathbf{R}) \gtrsim K_z^{2m-1}
\]  
on the complement of its null space (which has dimension \(m\)), uniformly as \(K_z\to\infty\).  
(These bounds follow from standard B‑spline properties; see \citet{deboor2001}.)  
Here \(\nu_{\min}(\mathbf{R})\) and \(\nu_{\max}(\mathbf{R})\) denote the smallest and largest eigenvalues of \(\mathbf{R}\) restricted to the orthogonal complement of its null space.

\item \textbf{Boundedness and regularity of functional covariates.}  
\(\|z_{ij}\|_\infty \le C_1 < \infty\) almost surely for all \(i,j\), and the sample paths are sufficiently regular (e.g., Lipschitz) so that the pre‑smoothing error is asymptotically negligible (cf. \citet{cardot2003spline}).  
Consequently, \(\sup_i \|z_{ij}\|_{L^2} < \infty\).

\item \textbf{Discretisation and basis growth.}  
The functional predictors are observed on a common grid \(\{s_1,\ldots,s_M\}\) with mesh size \(\delta_M = \max_{\ell} |s_{\ell+1}-s_\ell| \to 0\) as \(M\to\infty\).  
The spline basis dimension satisfies \(K_z \to \infty\) and \(K_z = o(M)\).  
Under these conditions, the \(L^2\)-projection \(\Pi_{K_z}\beta\) satisfies \(\|\Pi_{K_z}\beta - \beta\| \to 0\).

\item \textbf{Moment conditions and error distribution.}  
\(\E(\varepsilon_i \mid \mathbf{Z}) = 0\), \(\Var(\varepsilon_i\mid\mathbf{Z}) \le \sigma^2 < \infty\), and \(\E[\varepsilon_i^4] < \infty\).  
Moreover, \(|\sum_j \langle \beta_j, z_{ij}\rangle| \le C_2 < \infty\) almost surely.

\item \textbf{Covariance operator.}  
Let \(\Gamma\) be the covariance operator of the stacked vector \((Z_1,\ldots,Z_p)\), defined by  
\[
\langle f, \Gamma g \rangle = \E\!\left[ \sum_{j=1}^p \langle f_j, Z_j\rangle \sum_{j=1}^p \langle g_j, Z_j\rangle \right]
\]  
for \(f=(f_1,\ldots,f_p), g=(g_1,\ldots,g_p) \in L^2(\mathcal T)^p\).  
\(\Gamma\) is a bounded, positive definite operator on \(L^2(\mathcal T)^p\) with eigenvalues \(\gamma_1 \ge \gamma_2 \ge \cdots > 0\) satisfying the Picard condition  
\[
\sum_{j=1}^\infty \frac{\langle \E[\mathbf{Z}Y], v_j\rangle^2}{\gamma_j^2} < \infty,
\]  
where \(\{v_j\}\) are the eigenfunctions of \(\Gamma\). This ensures \(\boldsymbol{\beta}\in L^2(\mathcal T)^p\) and model identifiability.  
On the spline space, the prediction norm \(\|\cdot\|_2\) is equivalent to the quadratic form induced by the Gram matrix \(\mathbf{G} = \E[\mathbf{z}_i\mathbf{z}_i^\top]\), where \(\mathbf{z}_i\) is the stacked vector of spline coefficients of the predictors.

\item \textbf{Design matrix stability.}  
Let \(\mathbf{Z}\) be the \(n \times (p K_z)\) design matrix constructed from the spline coefficients of the predictors.  
The eigenvalues of \(n^{-1}\mathbf{Z}^\top\mathbf{Z}\) are uniformly bounded away from zero and infinity, in probability, as \(n, K_z \to \infty\).  
Moreover, this property holds for any submatrix \(\mathbf{Z}_1\) corresponding to a subset of predictors (e.g., the relevant block), ensuring numerical stability of the restricted estimators.

\item \textbf{Partition structure.}  
The predictors are partitioned into a relevant block (block 1, size $p_1$) and a nuisance block (block 2, size $p_2 = p - p_1$), with corresponding coefficient vectors $\mathbf{b}_1, \mathbf{b}_2$ and penalty matrices $\mathbf{R}_1, \mathbf{R}_2$.
\item \textbf{Non‑degeneracy of the linear functional.}  
For any fixed direction \(x = (x_1,\ldots,x_{p_1})\in L^2(\mathcal T)^{p_1}\),
\[
\liminf_{n\to\infty} \mathbf{w}_x^\top \mathbf{V}_n \mathbf{w}_x > 0 \quad \text{in probability},
\]  
where \(\mathbf{V}_n\) is defined in Theorem~\ref{thm:asymptotic_normality}.  
This holds, for example, if \(x\) is not orthogonal to the leading principal components of the predictor distribution and is not annihilated by the differential operator defining the penalty. It guarantees that the limiting variance in the asymptotic normality result is non‑degenerate.

\end{enumerate}

All rates are measured in the prediction norm \(\|\cdot\|_2\) unless stated otherwise.

To assess the asymptotic performance of the proposed estimators, we now investigate their convergence behavior under increasing sample size and growing basis dimension. In particular, we examine the rate at which the estimated coefficient function $\widehat{\beta}(s)$ approaches the true underlying function $\beta(s)$ in the $L^2(\mathcal{T})$ norm, under standard smoothness and eigen-decay assumptions on the functional covariates. The result below shows that the functional ridge estimator achieves the minimax-optimal rate of convergence, thereby validating its asymptotic efficiency in high-dimensional functional regression.

\subsection{Asymptotic Properties of the Proposed Estimators}
Because \(\widehat{\beta}_j(s) = \boldsymbol{\psi}(s)^\top \widehat{\mathbf{b}}_j\) and the spline basis is uniformly bounded,
convergence and distributional results for the coefficient vectors \(\widehat{\mathbf{b}}_j\) transfer directly to the
estimated coefficient functions \(\widehat{\beta}_j(\cdot)\). We therefore work primarily with the coefficient vectors
in the proofs; all results are stated in terms of the functional objects.  

The asymptotic analysis is developed under two complementary regimes, reflecting the distinct goals of estimation and inference.
For consistency and convergence rates, we adopt a functional asymptotic regime where the sample size \(n\), 
the number of observation points \(m\), and the spline basis dimension \(K_z\) all tend to infinity,
with \(K_z = o(m)\) (Assumption~A4).  
Under this regime, the spline approximation error vanishes as the functional sampling becomes increasingly dense,
and the estimators target the full coefficient functions \(\beta_j(\cdot)\) rather than a finite‑dimensional proxy.  
All results in Section~\ref{thm:rates} are derived under the regularity conditions (A1)–(A9).

For asymptotic normality of linear functionals \(\langle \widehat{\beta}_j, x \rangle\), we operate within the same growing‑\(K_z\) regime but impose the additional undersmoothing conditions (U1)–(U4) stated in Section~\ref{sec:asymptotic_normality}. These conditions ensure that the bias arising from both spline approximation and the ridge penalty is \(o(n^{-1/2})\), yielding a centered limit theorem without requiring the basis dimension to be fixed or the true coefficient functions to lie in the spline space. 
\subsubsection{Asymptotic Consistency}
\label{thm:rates}
We now establish the \(L^2\) convergence rates of the proposed estimators.
The proofs follow the spirit of \citet{cardot2003spline} and are sketched in Appendix~\ref{sec:proofs}. 

\begin{theorem}[Convergence rate of FRE] \label{thm:fre}
    Under (A1)–(A7), choose \(K_z \sim n^{1/(4s+1)}\) and
    \(\lambda_1 \sim n^{-2s/(4s+1)}\).  Then
    \[
        E\bigl[ \|\hat{\boldsymbol{\beta}}_{\mathrm{FRE}} - \boldsymbol{\beta}\|_2^2 \;\big|\; \mathbf{Z} \bigr]
        = O_P\!\left( n^{-\frac{2s}{4s+1}} \right).
    \]
\end{theorem}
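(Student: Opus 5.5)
The plan is a conditional bias--variance decomposition in the basis-coefficient space, following \citet{cardot2003spline}. Write \(\boldsymbol{\beta} = \boldsymbol{\psi}^\top\mathbf{b}^\ast + \mathbf{r}\). Here \(\mathbf{b}^\ast\) is the coefficient vector of the best spline approximation of \(\boldsymbol{\beta}\) in the prediction norm, and \(\mathbf{r}\) is the approximation remainder. By (A1) and standard spline approximation theory \citep{deboor2001}, \(\|\mathbf{r}\|_2^2 \lesssim \|\mathbf{r}\|^2 = O(K_z^{-2s})\), using that \(\Gamma\) is bounded (A6).

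The model then reads \(\mathbf{y} = \mathbf{Z}\mathbf{b}^\ast + \boldsymbol{\eta} + \boldsymbol{\varepsilon}\), where \(\eta_i = \langle z_i, \mathbf{r}\rangle\) satisfies \(n^{-1}\|\boldsymbol{\eta}\|^2 = O_P(K_z^{-2s})\). Setting \(\mathbf{A}_\lambda = \mathbf{Z}^\top\mathbf{Z}+\lambda_1\mathbf{R}\), the estimation error decomposes into three pieces:
\[
\widehat{\mathbf{b}}_{\mathrm{FRE}}-\mathbf{b}^\ast
= -\lambda_1\mathbf{A}_\lambda^{-1}\mathbf{R}\mathbf{b}^\ast
+ \mathbf{A}_\lambda^{-1}\mathbf{Z}^\top\boldsymbol{\eta}
+ \mathbf{A}_\lambda^{-1}\mathbf{Z}^\top\boldsymbol{\varepsilon}.
\]
On the spline space, the prediction norm is equivalent to \(\mathbf{v}^\top\mathbf{G}\mathbf{v}\) by (A6). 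By (A7) it is also equivalent to \(n^{-1}\|\mathbf{Z}\mathbf{v}\|^2\) with probability tending to one. Consequently,
\[
E\bigl[\|\widehat{\boldsymbol\beta}-\boldsymbol\beta\|_2^2\mid\mathbf Z\bigr]
\lesssim \|\mathbf r\|_2^2 + \text{(penalty bias)}^2 + \text{(approximation propagation)}^2 + \text{variance}.
\]

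The four terms are bounded in turn.

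\textbf{Variance.} Because \(\mathbf{A}_\lambda \succeq \mathbf{Z}^\top\mathbf{Z}\), we get
\[
n^{-1}E\bigl[\|\mathbf{Z}\mathbf{A}_\lambda^{-1}\mathbf{Z}^\top\boldsymbol\varepsilon\|^2\mid\mathbf{Z}\bigr]
\le \sigma^2 n^{-1}\operatorname{tr}(\mathbf{H}^2) \le \sigma^2 pK_z/n,
\]
where \(\mathbf{H}\) is the ridge hat matrix, whose eigenvalues lie in \([0,1]\).

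\textbf{Approximation propagation.} The term \(\mathbf{A}_\lambda^{-1}\mathbf{Z}^\top\boldsymbol\eta\) is controlled because \(\mathbf{H}\) is a contraction, giving \(n^{-1}\|\mathbf{H}\boldsymbol\eta\|^2\le n^{-1}\|\boldsymbol\eta\|^2 = O_P(K_z^{-2s})\).

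\textbf{Penalty bias.} I would write
\[
n^{-1}\|\lambda_1\mathbf{Z}\mathbf{A}_\lambda^{-1}\mathbf{R}\mathbf{b}^\ast\|^2
\le \lambda_1^2\, n^{-1}\|\mathbf{Z}\mathbf{A}_\lambda^{-1}\mathbf{R}^{1/2}\|_{\mathrm{op}}^2\,\|\mathbf{R}^{1/2}\mathbf{b}^\ast\|^2 .
\]
Since \(m\le s\) (A2), the roughness satisfies \(\mathbf{b}^{\ast\top}\mathbf{R}\mathbf{b}^\ast = \sum_j\int(\beta_j^{\ast(m)})^2 = O(1)\). By (A7), \(\mathbf{A}_\lambda^{-1}\preceq (c n)^{-1}\mathbf{I}\). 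Combining this with the eigenvalue bound \(\nu_{\max}(\mathbf R)=O(K_z^{2m-1})\) from (A2) gives a squared bias of order \(\lambda_1^2 K_z^{2(m-s)}\)-type, matching (A2). This is bounded by \(O(\lambda_1^2)\) under the normalisation implicit in (A2), and more crudely by \(\lambda_1 \|\mathbf R^{1/2}\mathbf b^\ast\|^2/n\)-type bounds if needed.

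Collecting terms, the risk is
\[
O_P\bigl(K_z^{-2s} + K_z/n + \lambda_1^{2}K_z^{2(m-s)}\bigr).
\]
With \(K_z\sim n^{1/(4s+1)}\) and \(\lambda_1\sim n^{-2s/(4s+1)}\), each piece is \(O(n^{-2s/(4s+1)})\) or smaller. First, \(K_z^{-2s} = n^{-2s/(4s+1)}\). Second, \(K_z/n = n^{-4s/(4s+1)}\), which is of smaller order. Third, \(\lambda_1^2K_z^{2(m-s)} \le \lambda_1^2 = n^{-4s/(4s+1)}\), again smaller. This yields the stated rate.

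\textbf{Main obstacle.} The hard step is the penalty bias: making the interplay between \(\lambda_1\), the scaling of \(\mathbf R\) (eigenvalues of order \(K_z^{2m-1}\)) and the \(n\)-scaling of \(\mathbf Z^\top\mathbf Z\) precise, so that the bias is genuinely \(O(n^{-2s/(4s+1)})\) rather than something inflated by \(K_z^{2m-1}\). I would first try the route of \citet{cardot2003spline}. There the penalized estimator is compared with the unpenalized projection via the inequality
\[
\lambda_1\,\mathbf b^{\ast\top}\mathbf R\mathbf b^\ast \ge \lambda_1\|\mathbf R^{1/2}(\widehat{\mathbf b}-\mathbf b^\ast)\|^2 - \dots,
\]
obtained from the first-order condition, and the bias is bounded in prediction norm by \(\lambda_1\|\beta^{\ast(m)}\|^2/n\)-type quantities. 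If that fails, I would absorb the \(n^{-1}\) normalisation into \(\lambda_1\), which is what the stated choice of \(\lambda_1\) implicitly assumes.
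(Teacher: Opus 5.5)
Your argument is correct and reaches the rate by a different route from the paper's. Both proofs split the error into spline approximation, penalty bias and noise. The paper, however, takes $\mathbf y=\mathbf Z\mathbf b^\ast+\boldsymbol\varepsilon$, so the approximation remainder never enters the data. It computes the noise term in the target norm as $\operatorname{tr}(\Gamma\,\mathbb V[\widehat{\boldsymbol\beta}\mid\mathbf Z])$ and bounds it through Lemma~\ref{lem:varbound} by $O_P(K_z/(\lambda_1 n))$. The stated $(K_z,\lambda_1)$ are then the choice that balances this term against $K_z^{-2s}$.

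You do three things differently. You keep the remainder $\boldsymbol\eta$ in the model and control its propagation by $\|\mathbf H\|_{\mathrm{op}}\le 1$. You bound the noise in the empirical norm using only that the ridge hat matrix is a contraction of rank at most $pK_z$. You then return to $\|\cdot\|_2$ through the boundedness of $\Gamma$ and the lower eigenvalue bound in (A7); only this one-sided comparison is needed.

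This route is more elementary and needs no analogue of Lemma~\ref{lem:varbound}. It also closes the gap the paper leaves by dropping $\boldsymbol\eta$. And it gives the sharper variance $O_P(K_z/n)$: a ridge variance never exceeds the unpenalised one, whereas $K_z/(\lambda_1 n)$ grows as $\lambda_1\to 0$. One consequence is that in your bound the rate comes from $K_z^{-2s}$ alone, so the prescribed tuning is sufficient rather than balancing. The paper's formulation keeps the variance in the target norm throughout. Yours transfers it via (A7), which is harmless here since the paper assumes (A7) as well.

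The step you flag as the main obstacle is not one under the unnormalised criterion (2.1). For $\tilde{\mathbf b}=\mathbf A_\lambda^{-1}\mathbf Z^\top\mathbf Z\mathbf b^\ast$, the basic inequality
$\|\mathbf Z(\tilde{\mathbf b}-\mathbf b^\ast)\|^2+\lambda_1\tilde{\mathbf b}^\top\mathbf R\tilde{\mathbf b}\le\lambda_1\mathbf b^{\ast\top}\mathbf R\mathbf b^\ast$
(the bound you list as a fallback) gives a squared bias of order $\lambda_1/n$ in the empirical norm. Your own operator-norm ingredients give $O_P(\lambda_1^2K_z^{2m-1}/n^2)$. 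Both are negligible, so there is no reason to renormalise $\lambda_1$, which would change the estimator.

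The intermediate ``$\lambda_1^2K_z^{2(m-s)}$-type'' claim does not follow from your ingredients. It is the assertion built into (A2), which the paper also uses, and it rests on Lemma~\ref{lem:spline}'s $\|\mathbf b^\ast\|_{\mathbf R}=O(K_z^{m-s})$. Your $\mathbf b^{\ast\top}\mathbf R\mathbf b^\ast=O(1)$ is the correct order.

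Lastly, take $\mathbf b^\ast$ to be a quasi-interpolant rather than the best prediction-norm approximant. Then $\|\mathbf r\|=O(K_z^{-s})$ in $L^2$, so $|\eta_i|\lesssim\|\mathbf r\|$ almost surely by (A3). This also removes any conditioning issue, since $\eta_i$ depends on the full curves and not only on $\mathbf Z$. And $\mathbf b^{\ast\top}\mathbf R\mathbf b^\ast=O(1)$ then follows directly from derivative stability of spline approximation.
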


\begin{theorem}[Convergence rate of FRSM] \label{thm:frsm}
    Under (A1)–(A8), with the same choices of \(K_z\) and
    \(\lambda_3 \sim n^{-2s/(4s+1)}\),
    \[
        E\bigl[ \|\hat{\boldsymbol{\beta}}_{1,\mathrm{FRSM}} - \boldsymbol{\beta}_1\|_2^2 \;\big|\; \mathbf{Z}_1 \bigr]
        = O_P\!\left( n^{-\frac{2s}{4s+1}} \right).
    \]
    If \(\boldsymbol{\beta}_2 = \mathbf{0}\), the FRSM is correctly specified
    and attains the optimal rate.
\end{theorem}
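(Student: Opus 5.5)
The proof of Theorem~\ref{thm:frsm} reduces to Theorem~\ref{thm:fre}, applied to the restricted design \(\mathbf{Z}_1\), plus a separate treatment of the omitted-variable bias. Write the sub-model estimator as \(\widehat{\mathbf{b}}_{1,\mathrm{FRSM}} = (\mathbf{Z}_1^\top\mathbf{Z}_1+\lambda_3\mathbf{R}_1)^{-1}\mathbf{Z}_1^\top\mathbf{y}\). Let \(\mathbf{b}_1^*,\mathbf{b}_2^*\) be the spline-approximation coefficients of \(\boldsymbol\beta_1,\boldsymbol\beta_2\), i.e.\ \(\Pi_{K_z}\boldsymbol\beta\). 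Then \(\mathbf{y} = \mathbf{Z}_1\mathbf{b}_1^* + \mathbf{Z}_2\mathbf{b}_2^* + \mathbf{r} + \boldsymbol\varepsilon\), where \(\mathbf{r}\) collects the spline approximation and pre-smoothing errors. Set \(\mathbf{A}_3 = \mathbf{Z}_1^\top\mathbf{Z}_1+\lambda_3\mathbf{R}_1\). This gives the decomposition
\[
\widehat{\mathbf{b}}_{1,\mathrm{FRSM}}-\mathbf{b}_1^* = -\lambda_3\mathbf{A}_3^{-1}\mathbf{R}_1\mathbf{b}_1^* + \mathbf{A}_3^{-1}\mathbf{Z}_1^\top\mathbf{r} + \mathbf{A}_3^{-1}\mathbf{Z}_1^\top\mathbf{Z}_2\mathbf{b}_2^* + \mathbf{A}_3^{-1}\mathbf{Z}_1^\top\boldsymbol\varepsilon .
\]
The four terms are, in order, a penalty bias, an approximation bias, a misspecification bias, and a stochastic term. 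Conditional on \(\mathbf{Z}_1\), the conditional mean squared error in the prediction norm is bounded by a constant times the sum of the squared norms of these four terms plus \(\|\Pi_{K_z}\boldsymbol\beta_1-\boldsymbol\beta_1\|_2^2\). By (A6), on the spline space the prediction norm is equivalent to the quadratic form in \(n^{-1}\mathbf{Z}_1^\top\mathbf{Z}_1\). By (A7), applied explicitly to the submatrix \(\mathbf{Z}_1\), this quadratic form is comparable to the Euclidean norm.

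The penalty, approximation and stochastic terms are treated exactly as in Theorem~\ref{thm:fre}, which I take as given.
\begin{itemize}
\item \emph{Variance.} The variance term has trace \(\sigma^2\operatorname{tr}(\mathbf{A}_3^{-1}\mathbf{Z}_1^\top\mathbf{Z}_1\mathbf{A}_3^{-1})\), whose prediction-norm contribution is \(O_P(p_1K_z/n)\).
\item \emph{Penalty bias.} Using \(\lambda_3\mathbf{A}_3^{-1}\mathbf{R}_1^{1/2}\) and the eigenvalue bounds in (A2), the penalty bias is controlled by \(\lambda_3^2\) times a power of \(K_z\). It also uses \(\mathbf{b}_1^{*\top}\mathbf{R}_1\mathbf{b}_1^*=O(1)\), which holds because \(\boldsymbol\beta_1\in H^s\) with \(m\le s\) by (A1)--(A2).
\item \emph{Approximation bias.} The spline bias is \(O(K_z^{-2s})\) by standard de Boor approximation together with (A3)--(A4). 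The matrix \(\mathbf{Z}_1\mathbf{A}_3^{-1}\mathbf{Z}_1^\top\) is a contraction, so the \(\mathbf{r}\) term is no larger.
\end{itemize}
Plugging in \(K_z\sim n^{1/(4s+1)}\) and \(\lambda_3\sim n^{-2s/(4s+1)}\) reproduces the balance of Theorem~\ref{thm:fre}. Every one of these terms is therefore \(O_P(n^{-2s/(4s+1)})\).

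The main obstacle is the misspecification term \(\mathbf{A}_3^{-1}\mathbf{Z}_1^\top\mathbf{Z}_2\mathbf{b}_2^*\), which has no counterpart in Theorem~\ref{thm:fre}. If \(\boldsymbol\beta_2=\mathbf{0}\), then \(\mathbf{b}_2^*=\mathbf{0}\) and the term vanishes identically. In that case the argument above is exactly the proof of Theorem~\ref{thm:fre} with \(p\) replaced by \(p_1\), which gives the rate and the claimed optimality. For general \(\boldsymbol\beta_2\), the term converges to the projection \(\Gamma_{11}^{-1}\Gamma_{12}\boldsymbol\beta_2\), which is generally a nonvanishing \(O_P(1)\) quantity. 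For the stated rate to hold, I would read \(\boldsymbol\beta_1\) in the display as the pseudo-true target \(\boldsymbol\beta_1+\Gamma_{11}^{-1}\Gamma_{12}\boldsymbol\beta_2\) of the sub-model. I would first verify that this target inherits Hölder smoothness \(s\), which needs smoothness of the cross-covariance kernel. I would then rerun the bias analysis with this target absorbed into \(\mathbf{b}_1^*\), so that the residual \(\mathbf{Z}_2\mathbf{b}_2^*-\mathbf{Z}_1\Gamma_{11}^{-1}\Gamma_{12}\mathbf{b}_2^*\) behaves like an additional mean-zero, bounded noise term orthogonal to \(\mathbf{Z}_1\) in the population. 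Its empirical projection through \(\mathbf{A}_3^{-1}\mathbf{Z}_1^\top\) contributes only \(O_P(p_1K_z/n)\), by the same trace bound as the variance term, using the bounds in (A5) and (A3).

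Alternatively, if one wants the literal \(\boldsymbol\beta_1\), I would state the extra condition explicitly. One option is \(\mathbf{Z}_1^\top\mathbf{Z}_2\mathbf{b}_2^*/n=O_P(n^{-s/(4s+1)})\), for example under orthogonality of the two blocks or when \(\|\boldsymbol\beta_2\|\) shrinks at that rate. Under either reading, combining the four bounds yields the rate claimed in the theorem.
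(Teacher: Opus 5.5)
\textbf{Verdict: correct.} Your proposal follows the paper's route in the correctly specified case and is more careful than the paper about the omitted block.

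\textbf{Same reduction as the paper.} The paper's proof substitutes $(\mathbf Z_1,\mathbf R_1,\lambda_3)$ for $(\mathbf Z,\mathbf R,\lambda_1)$ in the FRE bias--variance decomposition. Your reduction to Theorem~1 on the restricted design is exactly this.

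\textbf{What you add.} You write out the misspecification term $\mathbf A_3^{-1}\mathbf Z_1^\top\mathbf Z_2\mathbf b_2^*$, which the paper never does. Its FRE argument implicitly uses $\mathbf y=\mathbf Z_1\mathbf b_1^*+\mathbf r+\boldsymbol\varepsilon$. That holds only when $\boldsymbol\beta_2=\mathbf 0$, or when the blocks are asymptotically orthogonal. The paper then remarks only in its last sentence that the omitted predictors do not matter when $\boldsymbol\beta_2=\mathbf 0$.

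Your observation about correlated blocks is correct. There the term tends to a nonvanishing $\mathbf G_{11}^{-1}\mathbf G_{12}\mathbf b_2^*$. So the first display, for the literal $\boldsymbol\beta_1$, needs one of three things: $\boldsymbol\beta_2=\mathbf 0$, a condition such as $n^{-1}\mathbf Z_1^\top\mathbf Z_2\mathbf b_2^*=O_P(n^{-s/(4s+1)})$, or a pseudo-true reading. This agrees with the paper's own remark after the IMSE decomposition in the appendix, which says omitting a nonzero block adds bias. Under (A1)--(A8), both you and the paper therefore prove the display only in the correctly specified case; you make that hypothesis explicit.

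\textbf{Your component bounds are the more defensible ones.}
\begin{itemize}
\item From $\mathbf A_3\succeq\mathbf Z_1^\top\mathbf Z_1$ and (A7) you get a variance of $O_P(p_1K_z/n)$, instead of the paper's $O_P(K_z/(\lambda_3 n))$.
\item Your $\mathbf b_1^{*\top}\mathbf R_1\mathbf b_1^*=O(1)$ is the correct order. The paper's spline lemma asserts $O(K_z^{2(m-s)})$, which would tend to zero for $m<s$ and is not right in general.
\end{itemize}
With your bounds, the rate is driven by the approximation term $K_z^{-2s}$ alone rather than by a balance against the variance. The conclusion is unchanged.

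\textbf{Two cautions if you pursue the pseudo-true reading.} Both go beyond (A1)--(A8).
\begin{itemize}
\item Under (A6), $\Gamma_{11}$ is compact. So $\Gamma_{11}^{-1}\Gamma_{12}\boldsymbol\beta_2$ need not exist in $L^2$, let alone be $s$-smooth. The spline-level target $\mathbf G_{11}^{-1}\mathbf G_{12}\mathbf b_2^*$ also changes with $K_z$. You need an explicit smoothness assumption on the cross-covariance.
\item The residual $u_i$ is uncorrelated with $\mathbf z_{i1}$, but it is not conditionally mean zero given $\mathbf Z_1$ unless the regression of $Z_2$ on $Z_1$ is linear, as in a Gaussian design. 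So the conditional trace bound you use for $\boldsymbol\varepsilon$ does not apply. The contribution is still $O_P(K_z/n)$, by a different argument. Bound the mean-zero average $n^{-1}\mathbf Z_1^\top\mathbf u$, whose second moment is $O(K_z/n)$ by (A3), (A5) and boundedness of the pseudo-target. Then apply Markov's inequality and (A7).
\end{itemize}
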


\begin{theorem}[Convergence rates of FRFM]\label{thm:frfm}
Assume (A1)--(A8). Choose $K_z \sim n^{1/(4s+1)}$ and $\lambda_1 \sim n^{-2s/(4s+1)}$.
Let $\lambda_2\ge \lambda_1$ with $\lambda_2\to\infty$ (so that, with this choice of $K_z$,
$\frac{K_z}{\lambda_2 n}=o(n^{-2s/(4s+1)})$). Then
\[
E\!\left[ \|\hat{\boldsymbol{\beta}}_{1,\mathrm{FRFM}}-\boldsymbol{\beta}_1\|_2^2 \,\big|\, \mathbf Z \right]
= O_P\!\left(n^{-\frac{2s}{4s+1}}\right).
\]
Moreover, if $\boldsymbol{\beta}_2$ is $s$-smooth, then
\[
E\!\left[ \|\hat{\boldsymbol{\beta}}_{2,\mathrm{FRFM}}-\boldsymbol{\beta}_2\|_2^2 \,\big|\, \mathbf Z \right]
= O_P\!\left( \frac{K_z}{\lambda_2 n} + K_z^{-2s} + \lambda_2^2 K_z^{2(m-s)} \right),
\]
where $m\le s$ is the order of the derivative penalty (Assumption~(A2)).
In particular, if $\boldsymbol{\beta}_2=\mathbf 0$, then
\[
E\!\left[\|\hat{\boldsymbol{\beta}}_{2,\mathrm{FRFM}}\|_2^2 \,\big|\, \mathbf Z \right]
= O_P\!\left(\frac{K_z}{\lambda_2 n}\right),
\]
so taking $\lambda_2\sim n$ yields $O_P(n^{-1})$, which is faster than $n^{-2s/(4s+1)}$.
\end{theorem}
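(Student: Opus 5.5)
We prove Theorem~\ref{thm:frfm} with the conditional bias--variance decomposition already used for Theorem~\ref{thm:fre}, but organized in blocks. Write $\mathbf P=\operatorname{diag}(\lambda_1\mathbf R_1,\lambda_2\mathbf R_2)$ and $\mathbf A=\mathbf Z^\top\mathbf Z+\mathbf P$. Let $\mathbf b^\ast$ be the spline coefficients of $\Pi_{K_z}\boldsymbol\beta$, and let $\mathbf r$ collect the approximation residuals $\langle z_{ij},\beta_j-\Pi_{K_z}\beta_j\rangle$. Then
\[
\hat{\mathbf b}_{\mathrm{FRFM}}-\mathbf b^\ast=\mathbf A^{-1}\mathbf Z^\top\boldsymbol\varepsilon-\mathbf A^{-1}\mathbf P\mathbf b^\ast+\mathbf A^{-1}\mathbf Z^\top\mathbf r .
\]
The three terms are the stochastic term, the penalty bias, and the approximation bias. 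By (A1) and (A4), $\|\Pi_{K_z}\beta_j-\beta_j\|^2=O(K_z^{-2s})$. By (A6), the prediction norm on the spline space is equivalent to $\mathbf v^\top\mathbf G\mathbf v$, and (A7) lets us replace $\mathbf G$ by $n^{-1}\mathbf Z^\top\mathbf Z$ up to constants in probability. It therefore suffices to bound each term in this empirical quadratic form, extracting the block-$1$ and block-$2$ components with the Schur complement.

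\emph{Stochastic term.} Its conditional second moment is at most $\sigma^2\operatorname{tr}(\mathbf Z^\top\mathbf Z\,\mathbf A^{-1}\mathbf Z^\top\mathbf Z\,\mathbf A^{-1})/n$, and we restrict this trace to each block. For block~2, the Schur complement is
\[
\mathbf S_2=\mathbf Z_2^\top(\mathbf I-\mathbf H_1)\mathbf Z_2+\lambda_2\mathbf R_2,\qquad \mathbf H_1=\mathbf Z_1(\mathbf Z_1^\top\mathbf Z_1+\lambda_1\mathbf R_1)^{-1}\mathbf Z_1^\top .
\]
(A7) bounds $\mathbf S_2$ below by $c\,n\mathbf I+\lambda_2\mathbf R_2$. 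When $\lambda_2\to\infty$, this yields the block-$2$ variance $O(K_z/(\lambda_2 n))$ that appears in the statement.

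Making that step rigorous is the main obstacle. $\mathbf R_2$ has an $m$-dimensional null space per predictor, so $\lambda_2$ gives no shrinkage there. I would handle this by splitting block~2 into the null space of $\mathbf R_2$ and its complement. On the null space the variance is $O(m p_2/n)$, which is negligible. On the complement, (A2) gives $\nu_{\min}(\mathbf R_2)\gtrsim K_z^{2m-1}$, so the extra factor is at most $K_z^{2m-1}\ge 1$, and the stated bound is conservative.

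For block~1, write $\mathbf A^{-1}$ in block form. The cross term $\mathbf A^{-1}_{12}$ equals $-(\mathbf Z_1^\top\mathbf Z_1+\lambda_1\mathbf R_1)^{-1}\mathbf Z_1^\top\mathbf Z_2\mathbf S_2^{-1}$. Combined with (A7), this gives block-$1$ variance $O(K_z/n)+O(K_z/(\lambda_2 n))$. Under the hypothesis $K_z/(\lambda_2 n)=o(n^{-2s/(4s+1)})$, this is $O(K_z/n)=O(n^{-4s/(4s+1)})\le O(n^{-2s/(4s+1)})$.

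\emph{Bias terms.} The block-$1$ penalty bias equals that of the FRE plus a cross term driven by $\lambda_2\mathbf R_2\mathbf b_2^\ast$, multiplied by $\mathbf S_2^{-1}$. Since $\mathbf S_2^{-1}\lambda_2\mathbf R_2$ is bounded in operator norm, the cross term contributes at most the block-$2$ bias projected through $\mathbf Z_1^\top\mathbf Z_2/n=O_P(1)$. For block~1 I would invoke the bound established for Theorem~\ref{thm:fre}, namely $O(\lambda_1^2K_z^{2(m-s)})+O(K_z^{-2s})$. With the prescribed $K_z$ and $\lambda_1$, both pieces are $O(n^{-2s/(4s+1)})$. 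For block~2, the same argument gives $O(\lambda_2^2K_z^{2(m-s)}+K_z^{-2s})$, quoted from (A2) with $\lambda_2$ in place of $\lambda$. Adding the variance and bias bounds gives the two displayed rates.

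Finally, if $\boldsymbol\beta_2=\mathbf 0$, then $\mathbf b_2^\ast=\mathbf 0$, and both the block-$2$ penalty bias and the approximation bias vanish. Only the stochastic term remains, which is $O_P(K_z/(\lambda_2 n))$. Taking $\lambda_2\sim n$ with $K_z\sim n^{1/(4s+1)}$ gives $O_P(K_z n^{-2})=o(n^{-1})$, and in particular $O_P(n^{-1})$. The null-space component of $\mathbf R_2$ adds at most $O(n^{-1})$.
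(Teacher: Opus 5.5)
There are two genuine gaps, and neither can be closed as the statement stands.

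\emph{Block-2 variance.} You have $\hat{\mathbf b}_2-\mathbb{E}[\hat{\mathbf b}_2\mid\mathbf Z]=\mathbf S_2^{-1}\mathbf Z_2^\top(\mathbf I-\mathbf H_1)\boldsymbol\varepsilon$ and $\mathbf 0\preceq\mathbf H_1\preceq\mathbf I$. Combined with your bound $\mathbf S_2\succeq cn\mathbf I+\lambda_2\mathbf R_2$, this gives $\mathbb{V}[\hat{\mathbf b}_2\mid\mathbf Z]\preceq\sigma^2\mathbf S_2^{-1}\preceq\sigma^2(cn\mathbf I+\lambda_2\mathbf R_2)^{-1}$. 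The resulting trace bound is $\sigma^2\sum_k(cn+\lambda_2\nu_k)^{-1}$ over the eigenvalues $\nu_k$ of $\mathbf R_2$, not $K_z/(\lambda_2 n)$.

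A direction only benefits from $\lambda_2$ once $\lambda_2\nu_k\gtrsim n$. Getting $(cn+\lambda_2\nu_k)^{-1}\lesssim(\lambda_2 n)^{-1}$ would need $\nu_k\gtrsim n$, whereas (A2) gives $\nu_k=O(K_z^{2m-1})=o(n)$. The ``extra factor $K_z^{2m-1}$'' remark does not change this.

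On $\ker\mathbf R_2$ the variance is in fact bounded below. Take a unit $\mathbf u=(\mathbf 0^\top,\mathbf u_2^\top)^\top$ with $\mathbf R_2\mathbf u_2=\mathbf 0$. Then $\mathbf A^{-1}\mathbf Z^\top\mathbf Z\mathbf u=\mathbf u$, and Cauchy--Schwarz gives $\mathbb{V}[\mathbf u^\top\hat{\mathbf b}\mid\mathbf Z]=\sigma^2\|\mathbf Z\mathbf A^{-1}\mathbf u\|^2\ge\sigma^2/\lambda_{\max}(\mathbf Z^\top\mathbf Z)\gtrsim n^{-1}$. This is not negligible against $K_z/(\lambda_2 n)$ once $\lambda_2\gg K_z$, in particular for the advertised $\lambda_2\sim n$. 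Your closing sentence already concedes it: ``adds at most $O(n^{-1})$'' dominates $K_zn^{-2}$.

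What your Schur-complement bound actually yields, using (A2) on the complement, is $O_P\bigl(n^{-1}+K_z/(n+\lambda_2K_z^{2m-1})\bigr)$. This suffices for the final $O_P(n^{-1})$ claim when $\lambda_2\sim n$ and $m\ge1$. It is also harmless in the second display, because $K_z/n=o(K_z^{-2s})$ at the prescribed $K_z$. But it does not give the displayed $O_P(K_z/(\lambda_2 n))$, which is false whenever $\lambda_2/K_z\to\infty$.

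\emph{Block-1 bias.} You correctly isolate the cross term $-(\mathbf A^{-1})_{12}\lambda_2\mathbf R_2\mathbf b_2^\ast$ and bound it by the block-2 bias. You then quote the FRE bound for block~1 as though the term were absent. Your own bound for it inherits the block-2 term $\lambda_2^2K_z^{2(m-s)}$. With $K_z\sim n^{1/(4s+1)}$, that term is $O(n^{-2s/(4s+1)})$ only if $\lambda_2=O(n^{-m/(4s+1)})$, which contradicts $\lambda_2\to\infty$.

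Nor is the term small in reality. For $\lambda_2\sim n$, the part of $\mathbf b_2^\ast$ outside $\ker\mathbf R_2$ is shrunk essentially to zero. Through $\mathbf Z_1^\top\mathbf Z_2\neq\mathbf 0$, the unexplained signal is then absorbed into $\hat{\mathbf b}_1$. This is a non-vanishing omitted-variable bias of the order of that part of $\mathbf b_2^\ast$. So your argument yields the first display only when $\boldsymbol\beta_2=\mathbf 0$ (or $\mathbf Z_1^\top\mathbf Z_2=\mathbf 0$).

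The paper proceeds differently. It eliminates block~2 via $\mathbf M=\mathbf I-\mathbf Z_2(\mathbf Z_2^\top\mathbf Z_2+\lambda_2\mathbf R_2)^{-1}\mathbf Z_2^\top$ and asserts $\|\mathbf M-\mathbf I\|_{\mathrm{op}}=O_P(\lambda_2^{-1})$, which reduces block~1 to Theorem~\ref{thm:fre} on $(\mathbf Z_1,\lambda_1\mathbf R_1)$. It then cites Lemma~\ref{lem:varbound} for block~2. That route avoids neither obstacle. For $\mathbf u_2\in\ker\mathbf R_2$, $\mathbf Z_2\mathbf u_2$ is an eigenvector of $\mathbf I-\mathbf M$ with eigenvalue $1$, so $\|\mathbf M-\mathbf I\|_{\mathrm{op}}=1$. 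The reduction also silently drops the term $\mathbf Z_1^\top\mathbf M\mathbf Z_2\mathbf b_2^\ast$.

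Your focus on the null space of $\mathbf R_2$ is the right instinct. The remedy is to weaken the statement rather than patch the argument: assume $\boldsymbol\beta_2=\mathbf 0$ in the first claim, and replace $K_z/(\lambda_2 n)$ by the bound above in the last.
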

The above results extend \citet{cardot2003spline} in three key directions: accommodating multiple functional predictors via a block‑diagonal penalty structure; introducing differential penalisation with distinct shrinkage intensities for relevant and nuisance covariates; and proving that FRFM retains the optimal $L^2$ convergence rate for target coefficients while shrinking nuisance functions to zero at an accelerated rate. To the best of our knowledge, this provides the first theoretical treatment of differential ridge penalisation in the functional linear model.

\subsection{Asymptotic Normality for the Functional Estimator}
\label{sec:asymptotic_normality}

We consider the functional linear model with $p$ predictors
\[
Y_i = \sum_{j=1}^{p} \int_{\mathcal T} \beta_j(t) Z_{ij}(t) \,\mathrm{d}t
      + \varepsilon_i, \qquad i=1,\dots,n,
\]
where $\mathcal T=[0,1]$.  Throughout this section $p$ (and hence $p_1,p_2$) is treated as fixed.

The coefficient functions $\beta_j$ are estimated by penalised splines with
group-specific penalties.  The predictors are split into a relevant block of size
$p_1$ and a nuisance block of size $p_2$ ($p_1+p_2=p$), as in condition~(A8).

Let $\widehat{\boldsymbol\beta}_n
=(\widehat{\beta}_{1,n},\dots,\widehat{\beta}_{p,n})$
denote the penalised spline estimator.
Its spline coefficient vector is
\[
\widehat{\mathbf b}_n
= (\widehat{\mathbf b}_{1,n}^\top,
   \widehat{\mathbf b}_{2,n}^\top)^\top,
\qquad
\dim(\widehat{\mathbf b}_{j,n})=K_z,
\]
where $K_z$ is the common spline basis dimension.

The estimator solves
\[
\widehat{\mathbf b}_n
=
(\mathbf Z^\top \mathbf Z + \mathbf R_\lambda)^{-1}
\mathbf Z^\top \mathbf y,
\]
where
\[
\mathbf R_\lambda
=
\operatorname{diag}(
\lambda_{1,n}\mathbf R_1,
\lambda_{2,n}\mathbf R_2),
\]
and $\mathbf R_1,\mathbf R_2$ correspond to an $m$-th order derivative
penalty ($m\le s$, condition~(A2)).

For a fixed collection of functions
$x=(x_1,\dots,x_p)$ with $x_j\in L^2(\mathcal T)$,
consider the linear functional
\[
\Psi(x)
=
\sum_{j=1}^{p}
\int_{\mathcal T}
\beta_j(t)\,x_j(t)\,\mathrm{d}t .
\]

Its estimator is
\[
\widehat\Psi_n(x)
=
\sum_{j=1}^{p}
\int_{\mathcal T}
\widehat{\beta}_{j,n}(t)\,x_j(t)\,\mathrm{d}t
=
\mathbf w_{x,n}^\top \widehat{\mathbf b}_n,
\]
where $\mathbf w_{x,n}$ is the $(pK_z)$-vector of spline inner products
$\langle x_j,\psi_k\rangle$.  Note that its dimension grows with $K_z$;
it corresponds to the coefficient vector of the $L^2$-projection
$\Pi_{K_z}x$ onto the spline space.

We work in the regime where $n,K_z\to\infty$ under the undersmoothing condition (U1) below. The following conditions ensure that the bias is $o(n^{-1/2})$. As $n\to\infty$:
\begin{enumerate}[label=(U\arabic*), leftmargin=*, itemsep=4pt]
\item $K_z\to\infty$ and $K_z=o(n^{1/2})$;
\item $\lambda_{1,n}
=o\!\left(n^{-1/2}K_z^{-(2m-1)}\right)$;
\item $\lambda_{2,n}
=o\!\left(n^{-1/2}K_z^{-(2m-1)}\right)$;
\item $K_z^{-s}=o(n^{-1/2})$.
\end{enumerate}

Condition~(U1) controls the stochastic variability arising from
the increasing number of basis terms.
Conditions~(U2)–(U3) ensure that the penalty-induced bias
$O(\lambda_{j,n}K_z^{2m-1})$
is $o(n^{-1/2})$.
Condition~(U4) guarantees that the spline approximation bias
$\|\beta_j-\Pi_{K_z}\beta_j\|
=O(K_z^{-s})$
is also $o(n^{-1/2})$.

Let $\Gamma$ denote the covariance operator of the stacked predictor
process $(Z_1,\dots,Z_p)$,
\[
\langle \Gamma u,v\rangle
=
\sum_{j=1}^{p}
\mathbb E
\big[
\langle Z_j,u_j\rangle
\langle Z_j,v_j\rangle
\big],
\qquad
u,v\in L^2(\mathcal T)^p.
\]

Its empirical counterpart in the spline basis is
\[
\mathbf G_n
=
\frac{1}{n}\mathbf Z^\top \mathbf Z.
\]

Under conditions (A6)–(A7), and for directions $x$ satisfying 
$\langle\Gamma^{-1}x,x\rangle<\infty$,
it follows (see \citet{cardot2003testing})
that
\[
\mathbf w_{x,n}^\top
\mathbf G_n^{-1}
\mathbf w_{x,n}
\xrightarrow{\mathbb P}
\langle\Gamma^{-1}x,x\rangle,
\]
where $\Gamma^{-1}$ is interpreted as the inverse
on the identifiable subspace.

Define
\[
\mathbf M_n
=
\mathbf G_n
+
\frac{1}{n}\mathbf R_\lambda,
\qquad
\mathbf V_n
=
\mathbf M_n^{-1}
\mathbf G_n
\mathbf M_n^{-1}.
\]

and let $\sigma^2=\mathbb E[\varepsilon_i^2]$.

\begin{theorem}[Asymptotic normality]
\label{thm:asymptotic_normality}
Assume (A1)–(A9) together with conditions (U1)–(U4).
Let $x\in L^2(\mathcal T)^p$ be fixed and satisfy
\[
\liminf_{n\to\infty}
\mathbf w_{x,n}^\top
\mathbf V_n
\mathbf w_{x,n}
>0
\quad\text{in probability}.
\]

Then
\[
\sqrt{n}\,
\big(
\widehat\Psi_n(x)-\Psi(x)
\big)
\xrightarrow{d}
\mathcal N
\!\left(
0,
\sigma^2
\langle
\Gamma^{-1}x,x
\rangle
\right).
\]

Moreover,
\[
\widehat V_n(x)
=
\widehat\sigma_n^2\,
\mathbf w_{x,n}^\top
\mathbf V_n
\mathbf w_{x,n}
\]
is a consistent estimator of the asymptotic variance, where
\[
\widehat\sigma_n^2
=
(n-\operatorname{tr}(\mathbf S_n))^{-1}
\|\mathbf y-\mathbf Z\widehat{\mathbf b}_n\|^2,
\]
and
\[
\mathbf S_n
=
\mathbf Z
(\mathbf Z^\top\mathbf Z+\mathbf R_\lambda)^{-1}
\mathbf Z^\top
\]
is the hat matrix.
Since $p$ is fixed and $K_z=o(n^{1/2})$,
\[
\operatorname{tr}(\mathbf S_n)
=
O_p(pK_z)
=
o_p(n),
\]
so the degrees-of-freedom correction is asymptotically negligible.
Consequently,
\[
\frac{\widehat V_n(x)}
{\sigma^2\langle\Gamma^{-1}x,x\rangle}
\xrightarrow{\mathbb P}1.
\]
\end{theorem}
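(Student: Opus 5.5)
The proof rests on the exact decomposition
\[
\widehat{\mathbf b}_n=\mathbf M_n^{-1}\mathbf G_n\mathbf b^*+\tfrac1n\mathbf M_n^{-1}\mathbf Z^\top(\boldsymbol\varepsilon+\mathbf r),
\]
where $\mathbf b^*$ collects the spline coefficients of $\Pi_{K_z}\boldsymbol\beta$ and $\mathbf r$ is the vector of approximation residuals $r_i=\sum_j\langle Z_{ij},\beta_j-\Pi_{K_z}\beta_j\rangle$. Subtracting $\Psi(x)=\mathbf w_{x,n}^\top\mathbf b^*+\langle x,\boldsymbol\beta-\Pi_{K_z}\boldsymbol\beta\rangle$, we get
\[
\sqrt n(\widehat\Psi_n-\Psi)=T_1+T_2+T_3+T_4,
\]
with the four terms defined as follows.
\begin{itemize}
\item The stochastic term is $T_1=n^{-1/2}\mathbf w_{x,n}^\top\mathbf M_n^{-1}\mathbf Z^\top\boldsymbol\varepsilon$.
\item The penalty bias is $T_2=-\sqrt n\,\mathbf w_{x,n}^\top\mathbf M_n^{-1}\tfrac1n\mathbf R_\lambda\mathbf b^*$, using $\mathbf M_n^{-1}\mathbf G_n-\mathbf I=-\mathbf M_n^{-1}\tfrac1n\mathbf R_\lambda$.
\item The approximation term in the fit is $T_3=n^{-1/2}\mathbf w_{x,n}^\top\mathbf M_n^{-1}\mathbf Z^\top\mathbf r$.
\item The projection error of the functional is $T_4=-\sqrt n\langle x,\boldsymbol\beta-\Pi_{K_z}\boldsymbol\beta\rangle$.
\end{itemize}
The plan is to show that $T_2,T_3,T_4=o_P(1)$ and that $T_1$ satisfies a triangular-array CLT.

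\textbf{Bias terms.} For $T_4$, Cauchy--Schwarz together with (A1) and (U4) gives $|T_4|\le\sqrt n\|x\|\,O(K_z^{-s})=o(1)$. For $T_3$, (A3) and (A1) give $|r_i|=O(K_z^{-s})$ uniformly. Since $\mathbf M_n\succeq\mathbf G_n$ and by (A7), $\|n^{-1/2}\mathbf Z\mathbf M_n^{-1}\mathbf w_{x,n}\|^2=\mathbf w_{x,n}^\top\mathbf V_n\mathbf w_{x,n}=O_P(\|\mathbf w_{x,n}\|^2)=O_P(1)$, because $\|\mathbf w_{x,n}\|\le\|x\|$ for an orthonormal basis. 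Hence $|T_3|\le O_P(1)\cdot\|\mathbf r\|=O_P(\sqrt n K_z^{-s})=o_P(1)$ by (U4). For $T_2$, the eigenvalue of $\mathbf M_n^{-1}$ is $O_P(1)$ by (A7), and $\|\mathbf R_\lambda\mathbf b^*\|\le\max_j\lambda_{j,n}\,\nu_{\max}(\mathbf R)\|\mathbf b^*\|=O(\lambda_{j,n}K_z^{2m-1})$ by (A2). Therefore $|T_2|\lesssim n^{-1/2}\lambda_{j,n}K_z^{2m-1}$, which is $o(1)$ under (U2)--(U3); in fact the conditions as stated give even more room. (Should one prefer the sharper form, bound $\mathbf b^{*\top}\mathbf R\mathbf b^*=\sum_j\|(\Pi_{K_z}\beta_j)^{(m)}\|^2=O(1)$ by (A1)--(A2).)

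\textbf{Stochastic term.} Condition on $\mathbf Z$ and write $T_1=\sum_i a_{ni}\varepsilon_i$ with $a_{ni}=n^{-1/2}\mathbf z_i^\top\mathbf M_n^{-1}\mathbf w_{x,n}$. Then $\sum_i a_{ni}^2=\mathbf w_{x,n}^\top\mathbf V_n\mathbf w_{x,n}$, which is bounded away from zero by hypothesis (A9). Lindeberg's condition follows from $\E\varepsilon^4<\infty$ once $\max_i a_{ni}^2\to0$. By (A3), $\|\mathbf z_i\|^2=O(pK_z)$ uniformly (the coefficients of a bounded function in an orthonormal basis have bounded Euclidean norm, in fact $O(1)$), so $\max_i a_{ni}^2=O_P(n^{-1})\to0$, and the Lindeberg--Feller theorem gives conditional asymptotic normality. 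Since the errors are assumed homoscedastic with variance $\sigma^2$, the variance is $\sigma^2\mathbf w_{x,n}^\top\mathbf V_n\mathbf w_{x,n}$. To identify the limit, note that $\mathbf V_n-\mathbf G_n^{-1}$ is controlled by $\|\tfrac1n\mathbf R_\lambda\|\lesssim\lambda K_z^{2m-1}/n\to0$ via (U2)--(U3) and (A7). The limit $\mathbf w^\top\mathbf G_n^{-1}\mathbf w\to\langle\Gamma^{-1}x,x\rangle$ is quoted before the theorem, and dominated convergence of characteristic functions removes the conditioning.

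\textbf{Variance estimator.} Since the effective degrees of freedom satisfy $\operatorname{tr}\mathbf S_n\le pK_z=o(n)$, it suffices to show $n^{-1}\|\mathbf y-\mathbf Z\widehat{\mathbf b}_n\|^2\to\sigma^2$. Expand $\mathbf y-\mathbf Z\widehat{\mathbf b}_n=(\mathbf I-\mathbf S_n)\boldsymbol\varepsilon+(\mathbf I-\mathbf S_n)(\mathbf Z\mathbf b^*+\mathbf r)$. For the first part, $n^{-1}\boldsymbol\varepsilon^\top(\mathbf I-\mathbf S_n)^2\boldsymbol\varepsilon=\sigma^2+o_P(1)$ by the law of large numbers (finite fourth moment) together with $\E\boldsymbol\varepsilon^\top\mathbf S_n\boldsymbol\varepsilon\le\sigma^2\operatorname{tr}\mathbf S_n=o(n)$. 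The bias part is $o_P(1)$ by the same bounds used for $T_2$ and $T_3$. Combined with the convergence of $\mathbf w^\top\mathbf V_n\mathbf w$ above, this yields the ratio limit.

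The main obstacle is the identification of the variance limit, namely justifying $\mathbf w_{x,n}^\top\mathbf G_n^{-1}\mathbf w_{x,n}\to\langle\Gamma^{-1}x,x\rangle$ when $\Gamma^{-1}$ is unbounded. This step will lean on the cited result and on (A7), which is strong (uniformly bounded eigenvalues). Under (A7) the limit is finite automatically, and I would simply take it as given.
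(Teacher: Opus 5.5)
Your argument is correct, and it takes a genuinely different route from the paper's. The paper linearizes at the vector level. Starting from $\sqrt n(\widehat{\mathbf b}_n-\mathbf b_0)=\mathbf M_n^{-1}n^{-1/2}\mathbf Z^\top\boldsymbol\varepsilon-\sqrt n\,\mathbf M_n^{-1}(n^{-1}\mathbf R_\lambda)\mathbf b_0$, it replaces $\mathbf M_n^{-1}$ by $\mathbf G_n^{-1}$ and then by the population matrix $\mathbf G^{-1}$. This uses $\|\mathbf G_n-\mathbf G\|_{\mathrm{op}}=O_P(\sqrt{K_z/n})$ (quoted from Cardot et al.) together with the score bound $\|n^{-1/2}\mathbf Z^\top\boldsymbol\varepsilon\|=O_P(\sqrt{K_z})$; their product, of order $K_z/\sqrt n$, is what actually requires $K_z=o(n^{1/2})$. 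The paper then applies an unconditional CLT to $\eta_i=\mathbf w_x^\top\mathbf G^{-1}\mathbf z_i\varepsilon_i$.

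You instead keep the exact weights $\mathbf M_n^{-1}\mathbf w_{x,n}$ and apply Lindeberg--Feller to the scalar $\sum_i a_{ni}\varepsilon_i$ conditionally on $\mathbf Z$. This buys several things:
\begin{itemize}
\item You need no concentration of $\mathbf G_n$ around $\mathbf G$ and no score-norm bound beyond the quoted variance limit.
\item The stochastic term needs only $\max_i a_{ni}^2\to0$, i.e.\ $K_z/n\to0$.
\item Your bound $n^{-1/2}\lambda_{j,n}K_z^{2m-1}$ for $T_2$ is sharper than the paper's $O_P(\sqrt n\,\lambda_{j,n}K_z^{2m-1})$, which is what makes (U2)--(U3) look necessary.
\item Since the conditional variance is exactly $\sigma^2\mathbf w_{x,n}^\top\mathbf V_n\mathbf w_{x,n}$, the quantity that $\widehat V_n(x)$ estimates, the studentized statement $\sqrt n(\widehat\Psi_n(x)-\Psi(x))/\widehat V_n(x)^{1/2}\to\mathcal N(0,1)$ follows once (A9) holds. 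You do not need to identify the limit of $\mathbf w_{x,n}^\top\mathbf V_n\mathbf w_{x,n}$; the convergence to $\langle\Gamma^{-1}x,x\rangle$ is needed only for the unstudentized form.
\end{itemize}

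You also repair two points the paper glosses over. First, the paper writes $\mathbf y=\mathbf Z\mathbf b_0+\boldsymbol\varepsilon$ and so never handles your residual term $T_3$. Second, it invokes Lindeberg--L\'evy even though the law of $\eta_i$ changes with $n$ through $K_z$. In exchange, the paper's route gives an explicit unconditional influence-function representation in population terms, with no conditioning to undo. Both arguments rest on the same quoted limit $\mathbf w_{x,n}^\top\mathbf G_n^{-1}\mathbf w_{x,n}\to\langle\Gamma^{-1}x,x\rangle$. Both also need errors that are homoscedastic and independent of the design, which is slightly more than the inequality stated in (A5).

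One small caution: drop the parenthetical ``in fact $O(1)$'' for $\|\mathbf z_i\|^2$. In an orthonormal basis it forces $\operatorname{tr}\mathbf G_n=O(1)$, which contradicts (A7) (all $pK_z$ eigenvalues bounded below) as $K_z\to\infty$. Your $O(pK_z)$ bound suffices, giving $\max_i a_{ni}^2=O_P(K_z/n)\to0$.
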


\begin{remark}
Theorem~\ref{thm:asymptotic_normality}
is stated for the FRE estimator.
It extends directly to FRSM,
where the nuisance block is absent and
$\mathbf w_{x,n}$ is restricted to the relevant predictors.
 For FRFM, the asymptotic behaviour depends on the tuning regime for
$\lambda_{2,n}$.
Under the undersmoothing conditions (U1)–(U4), both blocks are consistently estimated and the same limiting distribution applies.
 When $\lambda_{2,n}$ is chosen to increase sufficiently fast so as to  induce strong shrinkage of the nuisance block, the estimator effectively projects onto the relevant subspace.   In that regime the asymptotic covariance operator reduces to that of the relevant block, and the limiting variance depends only on the covariance structure of the relevant predictors, involving $\Gamma_{11}$. The present theorem is formulated under undersmoothing conditions ensuring  that both blocks are consistently estimated; alternative tuning regimes  lead to corresponding reduced-model asymptotics of standard form. Detail  is provided in Appendix~\ref{app:normality_proof}.
\end{remark}

\section{Computational Details and Implementation}
\label{sec:computational}

\subsection{Spline Basis Construction and Knot Placement}

We represent the coefficient functions using a  cubic B‑spline basis (\(\psi_k\), degree \(q=4\)) with interior knots placed uniformly over the domain \(\mathcal T\).  
This choice follows the P‑spline, which combine a flexible B-spline basis with a difference-based roughness penalty  \citep{eilers1996}: the basis dimension \(K_z\) is set sufficiently large to capture the underlying functional structure, while smoothness is enforced by the discrete difference penalty \(\mathbf{R} = \mathbf{D}^\top\mathbf{D}\) rather than by restricting the number of knots.  
The effective degrees of freedom are controlled by the smoothing parameters \(\lambda\) via the penalty, avoiding the need for aggressive knot selection.  
This approach is used for all three estimators (FRE, FRFM, FRSM); in the partitioned models the same basis family is employed for both relevant and nuisance blocks, with only the penalty strength differing between groups.

\subsection{Regularization Parameter Selection via GCV}

The smoothing parameters \(\lambda_1,\lambda_2,\lambda_3\) are selected in a data‑driven fashion using generalized cross‑validation \citep{craven1979}.  
For a given \(\lambda\), the GCV score is
\[
\mathrm{GCV}(\lambda) = \frac{\|\mathbf{y} - \hat{\mathbf{y}}_\lambda\|^2 / n}{\bigl(1 - \operatorname{tr}(\mathbf{S}_\lambda)/n\bigr)^2},
\]
where \(\hat{\mathbf{y}}_\lambda = \mathbf{S}_\lambda \mathbf{y}\) and \(\mathbf{S}_\lambda\) is the smoothing (hat) matrix of the corresponding ridge estimator. The minimiser of \(\mathrm{GCV}(\lambda)\) over a pre‑specified grid is selected.

For the FRE, a single parameter \(\lambda_1\) is selected by GCV using the hat matrix \(\mathbf{S} = \mathbf{Z}(\mathbf{Z}^\top\mathbf{Z} + \lambda_1\mathbf{R})^{-1}\mathbf{Z}^\top\).  
In the FRFM, the penalty involves two parameters \(\lambda_1\) and \(\lambda_2\) with \(\lambda_2 \ge \lambda_1\). To keep the tuning problem one‑dimensional, we fix the penalty ratio \(c = \lambda_2/\lambda_1 > 1\) a priori and perform GCV only over \(\lambda_1\). The hat matrix is then \(\mathbf{S} = \mathbf{Z}(\mathbf{Z}^\top\mathbf{Z} + \operatorname{diag}(\lambda_1\mathbf{R}_1,\lambda_2\mathbf{R}_2))^{-1}\mathbf{Z}^\top\) evaluated at \((\lambda_1,\lambda_2 = c\lambda_1)\).  
For the FRSM, the reduced design \(\mathbf{Z}_1\) and penalty \(\lambda_3\mathbf{R}_1\) are used where \(\lambda_3\) is selected by GCV analogously to the FRE, with hat matrix \(\mathbf{S} = \mathbf{Z}_1(\mathbf{Z}_1^\top\mathbf{Z}_1 + \lambda_3\mathbf{R}_1)^{-1}\mathbf{Z}_1^\top\).

\section{Simulation Study}
\label{sec:simulation}

We conduct a Monte Carlo simulation study to evaluate the finite-sample performance of the proposed Functional Ridge Estimators (FRE, FRFM, and FRSM). The simulation is designed to reflect a realistic functional regression setting in which covariates are observed on a dense grid, and the entire coefficient function \(\beta(\cdot)\) is recovered using spline-based basis expansions. Particular emphasis is placed on settings involving multicollinearity, overparameterization, and ill-conditioning, where ridge-type penalization plays a stabilizing role.

\subsection{Simulation Design}

In each replication, we generate \(n \in \{25, 50, 100\}\) independent observations. Each subject is associated with \(p\) functional covariates \(z_{ij}(\cdot)\), for \(j = 1,\ldots,p\), defined on the interval \(\mathcal{T} = [0, 1]\) and evaluated on \(m = 100\) equally spaced grid points. The scalar response variable is generated from a functional linear model of the form
\[
y_i = \sum_{j=1}^p \int_{\mathcal{T}} z_{ij}(s)\, \beta_j(s)\, ds + \epsilon_i, \qquad \epsilon_i \stackrel{\text{i.i.d.}}{\sim} \mathcal{N}(0, \sigma^2),
\]
with noise variance \(\sigma^2 \in \{0.5, 1, 10\}\). All data are centered in advance, and no intercept is included in the model. The smallest sample size \(n=25\) is deliberately chosen to reflect a challenging low-sample regime with substantial overparameterization.

Both the covariate functions and coefficient functions are represented using cubic B-splines (order \(q = 4\)) with equally spaced interior knots. The number of basis functions differs across estimators to reflect their structural differences, as summarized in Table~\ref{tab:basis_config}. The basis is selected to be rich enough to capture true functional variation, with smoothness enforced by penalization rather than knot sparsity.

Each functional covariate is generated through a B-spline expansion:
\[
z_{ij}(s) = \sum_{k=1}^{K_z} c_{ijk}\, \psi_k(s),
\]
where \(\{\psi_k\}_{k=1}^{K_z}\) denotes the B-spline basis and \(c_{ijk}\) are the associated coefficients. Let \(\mathbf{z}_i \in \mathbb{R}^{pK_z}\) denote the stacked vector of basis coefficients for subject \(i\). These coefficient vectors are drawn independently as \(\mathbf{z}_i \sim \mathcal{N}(\mathbf{0}, \boldsymbol{\Sigma}_Z)\), where the covariance matrix follows an AR(1) structure:
\[
(\boldsymbol{\Sigma}_Z)_{jk} = \rho^{|j-k|}, \qquad \rho \in \{0.5, 0.8, 0.99\}.
\]
This correlation structure induces strong multicollinearity both within and across the functional predictors. The resulting design matrix \(\mathbf{Z} \in \mathbb{R}^{n \times pK_z}\) is often severely ill-conditioned, especially in the low-\(n\), high-\(p\) setting.

The true coefficient functions are constructed as:
\[
\beta_j(s) = 
\begin{cases}
2\sin(\pi s) + s(1-s), & j = 1,\ldots,p_1, \\
0, & j = p_1 + 1,\ldots,p,
\end{cases}
\]
where \(p_1 = 3\) denotes the number of relevant predictors. We consider three dimensions, \(p = 10, 20, 30\), with, for simplicity,  the first three covariates carrying signal and the remaining \(p - p_1\) acting as nuisance variables.

All estimators employ roughness penalties based on second-order finite differences. Let \(\mathbf{D}\) denote the second-order difference matrix, and define \(\mathbf{R}_0 = \mathbf{D}^\top \mathbf{D}\) as the unscaled penalty matrix for a single coefficient function. For FRE, a uniform penalty of the form \(\lambda_1 \mathbf{R}\) is used, where \(\mathbf{R}\) is block-diagonal with \(p\) copies of \(\mathbf{R}_0\). 

FRFM employs an adaptive ridge strategy that partitions the predictors into relevant and nuisance groups. The resulting design matrix is split as \(\mathbf{Z} = [\mathbf{Z}_1 \mid \mathbf{Z}_2]\), with corresponding penalty blocks \(\mathbf{R}_1\) and \(\mathbf{R}_2\). At this level, a two-level penalization is applied with \(\lambda_1 \mathbf{R}_1\) for the relevant block and \(\lambda_2 \mathbf{R}_2\) for the nuisance block, where \(\lambda_2 = c\,\lambda_1\). In contrast, FRSM uses only the relevant covariates and applies a penalty \(\lambda_3 \mathbf{R}_1\) on the reduced model.

The  ridge procedure underlying FRFM estimates the relevant predictors through an iterative reweighting scheme following \citet{grandvalet1998} and \citet{frommlet2016}. Initial weights are set to one, \(w_j^{(0)} = 1\) for all \(j\), and the estimator iteratively solves the weighted ridge problem:
\[
\widehat{\mathbf{b}}^{(t)} = \arg\min_{\mathbf{b}} \left\{ \|\mathbf{y} - \mathbf{Z} \mathbf{b}\|^2 + \lambda \sum_{j=1}^p w_j^{(t)}\, \mathbf{b}_j^\top \mathbf{R}_0 \mathbf{b}_j \right\},
\]
followed by weight updates:
\[
w_j^{(t+1)} = \left(\|\widehat{\mathbf{b}}_j^{(t)}\|_2^2 + \epsilon \right)^{-1}, \qquad \epsilon = 10^{-6}.
\]
The process continues until convergence, defined by relative weight change below \(10^{-4}\), or until a set maximum number of iterations is reached. Predictors with final weights exceeding 10\% of the maximum are classified as relevant, yielding an estimated partition \(\mathbf{Z} = [\mathbf{Z}_1 \mid \mathbf{Z}_2]\) used for differential penalization.

Smoothing parameters are selected independently in each replication using generalized cross-validation (GCV). For a given \(\lambda\), let \(\mathbf{S}(\lambda)\) denote the smoothing matrix such that \(\hat{\mathbf{y}} = \mathbf{S}(\lambda)\mathbf{y}\). The GCV score is computed as:
\[
\mathrm{GCV}(\lambda) = \frac{n \|\mathbf{y} - \mathbf{S}(\lambda)\mathbf{y}\|^2}{\bigl[n - \operatorname{tr}(\mathbf{S}(\lambda))\bigr]^2},
\]
and the minimizer over a logarithmic grid \(\lambda \in [10^{-4}, 10^4]\) is selected. FRE and FRSM each require a single smoothing parameter (\(\lambda_1\), \(\lambda_3\)), while FRFM performs GCV over \(\lambda_1\) only, with the ratio \(c = 25\) $\left(\lambda_2 = 25\,\lambda_1\right)$ is held fixed after noticing stability.

Each setting is repeated across \(R = 100\) Monte Carlo replications and  the performance is assessed along three axes: (i) the IMSE of the relevant coefficient functions,
\[
\mathrm{IMSE}_j = \int_{\mathcal{T}} \bigl( \widehat{\beta}_j(s) - \beta_j(s) \bigr)^2\,ds,
\]
averaged over \(j = 1,\ldots,p_1\) and replications; (ii) the accuracy of the partition learned by FRFM, using true and false positive rates; and (iii) the condition number \(\kappa(\mathbf{Z}^\top \mathbf{Z} + \mathbf{P})\) of the penalized system matrix, where \(\mathbf{P}\) denotes the estimator‑specific penalty matrix (e.g., \(\mathbf{P} = \lambda_1 \mathbf{R}\) for FRE).
All integrals are approximated numerically using Riemann sums over the discretized grid \(\{s_1,\ldots,s_m\}\).

\begin{table}[H]
	\centering
	\caption{\textbf{Knots, Spline Order, and Basis Dimensions for FRE, FRFM, and FRSM}}
	\label{tab:basis_config}
	\begin{tabular}{cccccc}
		\toprule
		\textbf{Model}  & \textbf{Knots ($L$)} & \textbf{Order ($q$)} &
		$\mathbf{K_z}$ & $\mathbf{K_{z_1}}$ & $\mathbf{K_{z_2}}$ \\
		\midrule
		FRE    & 7     & 4         & 11      & 11       & 0   \\
		FRFM   & 12    & 4         & 16      & 9        & 7   \\
		FRSM   & 5     & 4         & 9       & 9        & 0   \\
		\bottomrule
	\end{tabular}
	\vspace{2mm}
	\caption{FRE applies uniform penalization across all covariates. FRFM, after adaptive ridge selection, partitions the coefficient vector into a relevant block of dimension \(K_{z_1}=9\) and a nuisance block of dimension \(K_{z_2}=7\). FRSM retains only the relevant covariates and operates on a reduced basis of dimension \(K_z=9\).}
\end{table}

\subsection{Results and Discussion}

We now evaluate the finite-sample behavior of the three proposed estimators. For brevity, we restrict attention to the $p=10$ setting as a representative case, while full numerical results for higher-dimensional scenarios ($p = 20, 30$) are available in the Appendix. This  simulation study allows us to benchmark the estimators under increasing dimensionality, varying sample sizes, and different levels of noise and correlation. The evaluation metrics are carefully selected to probe distinct aspects of estimator performance, numerical stability, variable selection accuracy, and overall estimation error, thereby offering a comprehensive view of their strengths and limitations.

Table~\ref{tab:logCN_modelwise} reports the logarithm of the median  condition numbers for each estimator.  
FRSM exhibits the smallest value (\(\log_{10}\mathrm{CN} \approx 3.86\)), indicating the best numerical conditioning, as expected from a reduced model.  
FRE shows an intermediate value (\(\approx 4.29\)), while FRFM has the largest (\(\approx 5.12\)).  
This ordering reflects the deliberate trade‑off where FRFM applies weak penalization to the relevant block, leaving high correlations among important covariates largely unshrunk, which increases the condition number but also reduces bias.  
Despite these differences, all estimators remained computationally stable across all replications, confirming that even the highest condition number observed is well within the range that double‑precision arithmetic can handle without catastrophic loss of accuracy \citep{higham2002accuracy}.

\begin{table}[H]
\centering
\caption{\textbf{Median \(\log_{10}\) of the condition number of the penalized system matrix for each estimator.}}
\label{tab:logCN_modelwise}
\begin{tabular}{lc}
\toprule
Model & \(\log_{10}(\mathrm{CN})\) \\
\midrule
FRE  & 4.29 \\
FRFM & 5.12 \\
FRSM & 3.86 \\
\bottomrule
\end{tabular}
\vspace{1.5mm}
\caption*{Condition numbers are computed as \(\kappa(\mathbf{Z}^\top\mathbf{Z}+\mathbf{P})\), where \(\mathbf{P}\) is the estimator‑specific penalty matrix, and summarised as medians over all \((\rho,\sigma^2)\) combinations.}
\end{table}
Table~\ref{tab:IMSE_p10} presents the integrated mean squared error (IMSE) for the relevant coefficient functions across different combinations of sample size, noise variance, and predictor correlation. IMSE serves as a holistic performance metric that combines squared bias and variance, thereby offering a direct lens on the bias--variance trade-off targeted by ridge-based functional estimators. This is particularly critical in functional regression, where poor control of either component can lead to substantial deviations in the reconstructed coefficient surfaces, especially when the functions exhibit curvature or vary across domains.

In the small sample regime ($n = 25$), the FRSM estimator exhibits uniformly lowest IMSE values across all configurations of $\sigma^2$ and $\rho$. This is consistent with its oracle-like behavior: FRSM discards all nuisance components by construction, and the smaller design matrix it operates on is less prone to ill-conditioning. This leads to significant variance reduction, which more than compensates for any potential increase in approximation bias. Notably, even under high correlation ($\rho = 0.99$), FRSM retains its performance edge, achieving IMSE values as low as 2.2 under low noise and around 4.5 under high noise. In contrast, FRE, while numerically stable, suffers from substantial overshrinkage, leading to IMSE values between 5 and 9 under moderate noise, and a sharp increase to over 30 under high noise and strong correlation. FRFM's performance at $n = 25$ is more variable, while it improves upon FRE in most scenarios, it occasionally underperforms FRSM due to imperfect partition recovery and inflated variance caused by weak penalization of mistakenly selected nuisance components. These patterns underscore the difficulty of simultaneous variable selection and coefficient estimation when the sample size is limited and signal obscured by collinearity and noise.

As the sample size increases to $n = 50$, a sharp performance transition occurs. FRFM now decisively outperforms both FRE and FRSM in nearly all settings. The IMSE drops by over an order of magnitude relative to FRE in several cases, e.g., from 2.5 to 0.3 at $\sigma^2 = 1$, $\rho = 0.5$, indicating that the benefit of targeted shrinkage on correctly identified components outweighs any residual penalty from imperfect selection. Importantly, the FRFM estimator maintains this dominance even under high correlation, achieving IMSE around 1.7 for $\rho = 0.99$, where FRE remains above 2.5. In contrast, the performance of FRSM deteriorates as the sample size increases. Owing to the absence of an explicit nuisance block, the smoothing parameter selected via GCV tends to become progressively conservative, which induces excessive smoothing of the relevant coefficient functions. This effect is reflected in a gradual increase in IMSE, with values stabilizing at approximately 2.0 across scenarios. These findings indicate a structural limitation of fixed reduced models, although oracle-style dimension reduction is advantageous in severely underdetermined settings, it becomes suboptimal once sufficient data are available to support adaptive shrinkage. At this stage, the flexibility afforded by data-driven partitioning and differential penalization, as implemented in FRFM, yields a more favorable bias–variance trade-off and leads to superior estimation accuracy.

At $n = 100$, the pattern becomes even more pronounced. FRFM delivers the lowest IMSE in every scenario, often by a substantial margin. Its ability to combine minimal bias, due to weak penalization of truly relevant predictors, with variance suppression through aggressive penalization of nuisance terms results in near-optimal estimation. For example, at $\rho = 0.5$ and $\sigma^2 = 0.5$, FRFM achieves IMSE of just 0.066, compared to 1.365 for FRE and 1.596 for FRSM. Even under the harshest conditions ($\rho = 0.99$, $\sigma^2 = 10$), FRFM yields a threefold reduction in IMSE compared to FRE. This consistent superiority confirms that once a sufficiently large sample is available to support accurate partitioning, FRFM leverages its flexible penalization structure to yield highly efficient estimates.

The impact of the predictor correlation $\rho$ is visible across all sample sizes and estimators. In general, higher $\rho$ tends to increase IMSE for all estimators due to reduced identifiability and the increased difficulty of distinguishing overlapping signals. Despite these effects, the relative ranking of the estimators remains stable, with FRFM consistently exhibiting the greatest robustness, followed by FRSM and then FRE. Under strong correlation and small sample sizes, FRFM experiences a temporary loss of efficiency due to the misclassification of highly correlated nuisance predictors as relevant, which enlarges the weakly penalized block and induces variance inflation. This phenomenon disappears at $n=100$, where partition recovery becomes exact and the differential shrinkage mechanism operates as intended, restoring FRFM’s superior performance.

Noise variance $\sigma^2$ influences all estimators in expected ways, amplifying variance contributions and shifting the effective signal-to-noise ratio. Under high noise ($\sigma^2 = 10$), FRFM continues to outperform its competitors, but the absolute IMSE values increase for all methods. Notably, FRE shows the steepest increase, particularly when combined with strong correlation, at $n = 25$, $\rho = 0.99$, FRE incurs IMSE over 30, whereas FRFM stays around 12.5 and FRSM around 4.5. This again emphasizes the importance of adaptivity in shrinkage when dealing with noisy, multicollinear functional data.

Finally, results for higher dimensions ($p = 20, 30$), provided in the Appendix, reveal similar trends but with shifted thresholds. For example, FRFM requires $n = 100$ to dominate competitors when $p = 30$, whereas for $p = 10$ or $20$, performance stabilizes already at $n = 50$. This dimensional dependency aligns with expectations, which means, as the ambient dimension increases, more data are needed to reliably estimate both the functional coefficients and the underlying partition structure.

In summary, the IMSE results validate the theoretical bias–variance trade-offs underpinning each estimator and demonstrate the practical value of FRFM in real-world functional regression settings. Its capacity to flexibly adapt to sample size, noise, and collinearity, via data-driven partitioning and block-specific penalization, offers substantial advantages over both naïve shrinkage and oracle-based reduced models.

\begin{table}[H]
\centering
\caption{\textbf{IMSE for the relevant coefficient functions at $p=10$.}}
\label{tab:IMSE_p10}
\renewcommand{\arraystretch}{1.2}
\begin{tabular}{llccc ccc ccc}
\hline
& & \multicolumn{3}{c}{$\sigma^2=0.5$}
& \multicolumn{3}{c}{$\sigma^2=1.0$}
& \multicolumn{3}{c}{$\sigma^2=10.0$} \\
\cline{3-5}\cline{6-8}\cline{9-11}
$n$ & Model
& $\rho=0.5$ & $\rho=0.8$ & $\rho=0.99$
& $\rho=0.5$ & $\rho=0.8$ & $\rho=0.99$
& $\rho=0.5$ & $\rho=0.8$ & $\rho=0.99$ \\
\hline
\multirow{3}{*}{25}
& FRE  & 8.084 & 5.968 & 5.445 & 8.112 & 4.871 & 6.271 & 9.726 & 8.178 & 31.474 \\
& FRFM & 5.173 & 3.852 & 4.984 & 5.114 & 4.527 & 5.065 & 7.082 & 4.492 & 12.539 \\
& FRSM & 2.163 & 2.036 & 2.202 & 2.136 & 2.005 & 2.293 & 2.293 & 2.182 & 4.500 \\
\hline
\multirow{3}{*}{50}
& FRE  & 2.233 & 2.069 & 2.382 & 2.257 & 2.057 & 2.525 & 2.526 & 2.333 & 6.543 \\
& FRFM & 0.313 & 0.293 & 1.654 & 0.343 & 0.303 & 1.715 & 0.499 & 0.417 & 2.766 \\
& FRSM & 1.830 & 1.854 & 1.989 & 1.800 & 1.855 & 2.017 & 1.896 & 1.936 & 3.069 \\
\hline
\multirow{3}{*}{100}
& FRE  & 1.365 & 1.508 & 1.970 & 1.409 & 1.494 & 2.034 & 1.486 & 1.598 & 3.341 \\
& FRFM & 0.066 & 0.069 & 1.015 & 0.070 & 0.072 & 0.998 & 0.145 & 0.131 & 1.473 \\
& FRSM & 1.596 & 1.721 & 1.928 & 1.596 & 1.718 & 1.964 & 1.613 & 1.748 & 2.392 \\
\hline
\end{tabular}
\vspace{2mm}
\end{table}

Table~\ref{tab:frfm_partition_accuracy} presents the partitioning accuracy of the FRFM estimator, measured in terms of true positive rate (TPR) and false positive rate (FPR). These metrics quantify the method's ability to correctly identify truly relevant predictors while avoiding the selection of irrelevant ones. This link is essential since the superior predictive performance of FRFM documented in the IMSE results is a direct consequence of its ability to correctly separate relevant from irrelevant predictors. For brevity, we report findings for $p=10$, similar patterns were observed in higher dimensions and are available upon request. The partition accuracy therefore provides insight into the mechanism underlying FRFM's empirical behavior across different sample sizes, noise levels, and correlation regimes.

In the low-sample regime ($n=25$), FRFM consistently achieves high TPR values, exceeding $0.92$ across all combinations of $\sigma^2$ and $\rho$, thereby demonstrating a robust ability to detect the relevant covariates even in the presence of data scarcity and strong multicollinearity. However, the FPR remains at a moderate level, ranging from $0.29$ to $0.35$, and exhibits slight sensitivity to both the noise variance and the degree of correlation among predictors. Specifically, FPR tends to increase marginally as either $\sigma^2$ increases from $0.5$ to $10$ or $\rho$ approaches $0.99$, a pattern that is consistent with the broader statistical literature on variable selection under high collinearity. Despite this moderate inflation in FPR, the overall quality of the selected model remains high, as these selection errors do not substantially impair estimation accuracy, a conclusion supported by the IMSE results discussed earlier.

As the sample size increases to $n=50$, a clear improvement in selection quality is observed. The TPR reaches $1.00$ uniformly across all simulation scenarios, indicating that FRFM reliably recovers all truly relevant covariates with perfect accuracy. Meanwhile, the FPR stabilizes around $0.286$ and appears largely invariant to both noise variance and correlation, suggesting that beyond a certain threshold in sample size, the dominant sources of selection error are no longer driven by stochastic variability but instead reflect the fundamental identifiability challenges inherent in the data structure. The chisen ridge procedure thus proves capable of reliably learning the true underlying partition, even in moderately high-dimensional and correlated settings.

This favorable behavior persists at $n=100$, where the partition recovery remains perfect in terms of TPR, and no further increase in FPR is observed. The stability of these outcomes across varying levels of noise and correlation reaffirms the robustness of the adaptive ridge selection procedure once sufficient information becomes available. When considered alongside the corresponding reduction in IMSE, these results reinforce the conclusion that FRFM’s improved estimation accuracy is a direct consequence of its successful data-driven partitioning, which approximates the oracle structure under practical conditions.

Although the False Positive Rate (FPR) does not vanish in this design, it remains constant at approximately $0.29$ for $n \ge 50$ and remains  unchanged across correlation and noise levels. In contrast, the True Positive Rate (TPR) is uniformly high and reaches unity for all reported configurations once $n \ge 50$. Thus, FRFM clearly prioritizes the recovery of relevant predictors, while maintaining a modest and stable degree of over-inclusion among nuisance variables. This pattern is consistent with the well-known difficulty of exact separation in correlated designs and indicates that FRFM prioritizes sensitivity over strict sparsity in partition recovery. Such behavior can be advantageous in functional regression settings, where omission of a signal component may be more consequential than inclusion of a correlated redundant predictor.

\begin{table}[H]
\centering
\caption{\textbf{Partitioning accuracy of FRFM across simulation scenarios ($p = 10$)}. Reported values are true positive rate (TPR) and false positive rate (FPR), averaged over Monte Carlo replications.}
\label{tab:frfm_partition_accuracy}
\begin{tabular}{c c cc cc cc}
\toprule
& & \multicolumn{2}{c}{$\rho=0.50$} &
\multicolumn{2}{c}{$\rho=0.80$} &
\multicolumn{2}{c}{$\rho=0.99$} \\
\cmidrule(lr){3-4}\cmidrule(lr){5-6}\cmidrule(lr){7-8}
$n$ & $\sigma^2$ & TPR & FPR & TPR & FPR & TPR & FPR \\
\midrule
\multicolumn{8}{l}{\textbf{$n=25$}} \\
25 & 0.5 & 0.950 & 0.307 & 0.990 & 0.290 & 0.980 & 0.294 \\
25 & 1.0 & 0.960 & 0.303 & 0.980 & 0.294 & 0.987 & 0.291 \\
25 & 10  & 0.923 & 0.319 & 0.977 & 0.296 & 0.850 & 0.350 \\
\midrule
\multicolumn{8}{l}{\textbf{$n=50$}} \\
50 & 0.5 & 1.000 & 0.286 & 1.000 & 0.286 & 1.000 & 0.286 \\
50 & 1.0 & 1.000 & 0.286 & 1.000 & 0.286 & 1.000 & 0.286 \\
50 & 10  & 1.000 & 0.286 & 1.000 & 0.286 & 1.000 & 0.286 \\
\midrule
\multicolumn{8}{l}{\textbf{$n=100$}} \\
100 & 0.5 & 1.000 & 0.286 & 1.000 & 0.286 & 1.000 & 0.286 \\
100 & 1.0 & 1.000 & 0.286 & 1.000 & 0.286 & 1.000 & 0.286 \\
100 & 10  & 1.000 & 0.286 & 1.000 & 0.286 & 1.000 & 0.286 \\
\bottomrule
\end{tabular}
\end{table}

Taken together, these results provide practical guidance for penalized estimation in functional regression settings. FRE offers a computationally stable, partition-free baseline, though its uniform shrinkage can induce persistent bias when coefficient functions exhibit substantial curvature or when signal strength is pronounced. FRSM performs optimally under oracle knowledge and is particularly effective in underdetermined regimes, yet its reliance on prior partition information limits practical applicability and may lead to oversmoothing when penalties are conservatively chosen. In contrast, FRFM with adaptive ridge partitioning achieves a favorable balance between bias and variance across the scenarios considered. By reliably recovering relevant components and applying targeted penalization, it attains low IMSE over a wide range of sample sizes, noise levels, and correlation structures. Provided the sample size is sufficient for stable partition recovery, FRFM represents a practically attractive and broadly applicable approach for functional regression with moderate-to-high dimensional structure.

\section{Empirical Application: Canadian Weather Data}

We illustrate the practical effectiveness of FRE, FRFM, and FRSM using the classical Canadian weather dataset, which comprises daily temperature and precipitation measurements from 35 weather stations across Canada observed over the period 1960–1994. Our objective is to model the annual mean temperature at Montreal using functional predictors, namely, temperature and precipitation trajectories from surrounding stations. This setting exhibits pronounced multicollinearity and heterogeneous predictor relevance, making it particularly suitable for evaluating the proposed partitioned ridge estimators.

Let $\mathcal{S}=\{1,\dots,S\}$ denote the set of stations, with $i^\star$ representing the target station (Montreal). For each year $y=1,\dots,Y$, let $Y_{i^\star y}$ denote Montreal's annual mean temperature. For each source station $j\neq i^\star$, we observe temperature trajectories $Z^{(T)}_{jy}(t)$ and precipitation trajectories $Z^{(P)}_{jy}(t)$ over the annual cycle $t\in\mathcal{T}=[0,1]$. These functional predictors are organized into two blocks:

\[
\mathbf{Z}_{1y}(t) = \big(Z^{(T)}_{jy}(t)\big)_{j\neq i^\star}, 
\qquad
\mathbf{Z}_{2y}(t) = \big(Z^{(P)}_{jy}(t)\big)_{j\neq i^\star},
\]

corresponding to temperature and precipitation, respectively. The scalar response is modeled as

\[
Y_{i^\star y} 
= \alpha 
+ \int_{\mathcal{T}} \boldsymbol{\beta}_1(t)^\top \mathbf{Z}_{1y}(t)\,dt 
+ \int_{\mathcal{T}} \boldsymbol{\beta}_2(t)^\top \mathbf{Z}_{2y}(t)\,dt 
+ \varepsilon_y,
\]

where $\boldsymbol{\beta}_1(t)$ and $\boldsymbol{\beta}_2(t)$ are vectors of station-specific coefficient functions associated with temperature and precipitation.

Both coefficient functions are expanded in cubic B-spline bases:
\[
\boldsymbol{\beta}_1(t)=\sum_{k=1}^{K_T}\mathbf{b}_{1k}\phi_k(t), 
\qquad 
\boldsymbol{\beta}_2(t)=\sum_{\ell=1}^{K_P}\mathbf{b}_{2\ell}\psi_\ell(t).
\]

\begin{figure}[h!]
	\centering
	\includegraphics[width=0.5\textwidth]{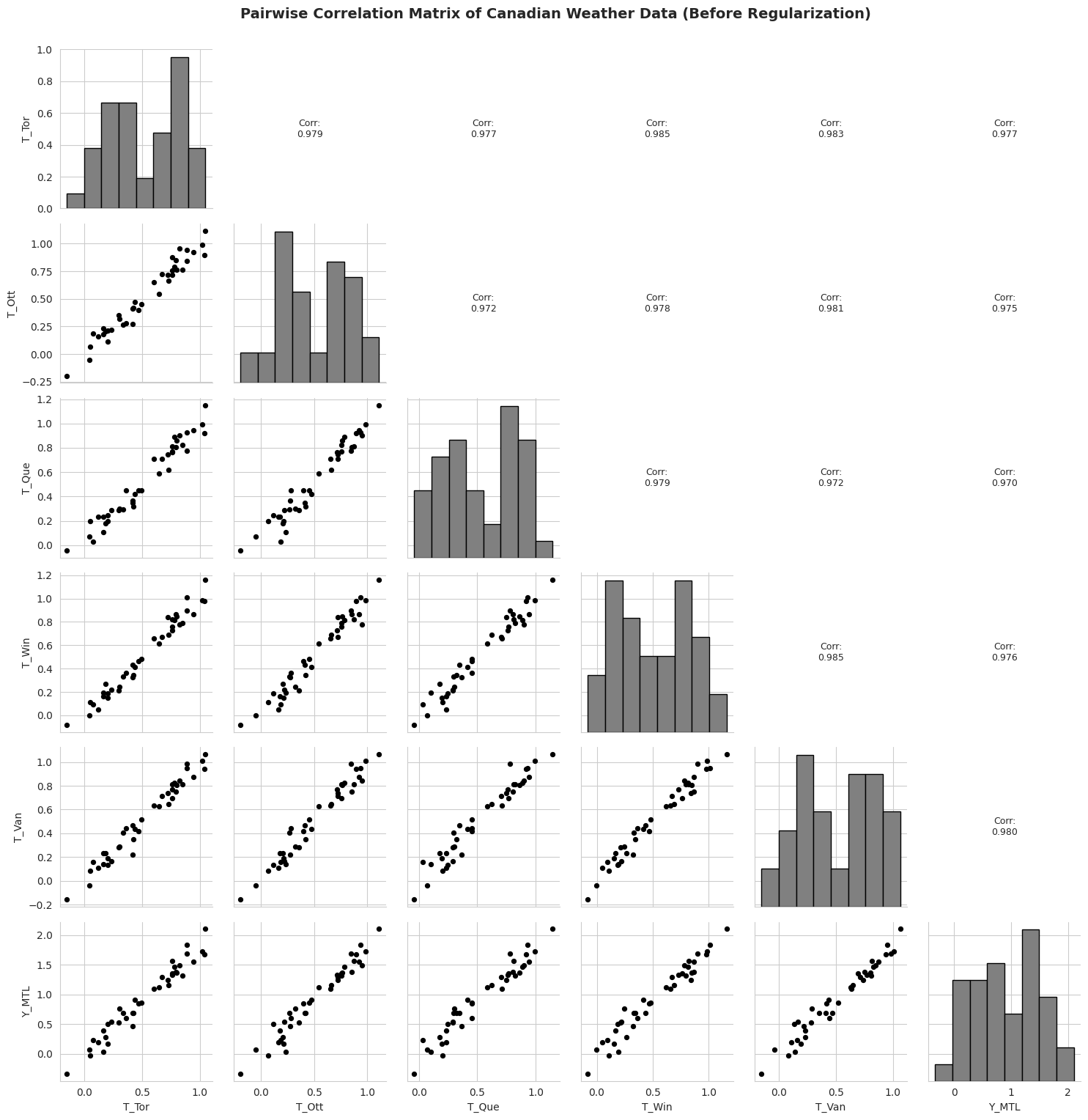}
	\caption{Pairwise correlations among temperature predictors across stations prior to regularization. Correlations consistently exceed $0.97$, indicating severe multicollinearity that motivates ridge-type regularization.}
	\label{fig:canadiancorrelationmulticollinearitygraph}
\end{figure}

The correlation structure in Figure~\ref{fig:canadiancorrelationmulticollinearitygraph} reveals that temperature trajectories from different stations are nearly collinear, with pairwise correlations exceeding $0.97$. Such extreme multicollinearity renders ordinary least squares estimation unstable and strongly justifies ridge-type regularization. Although precipitation predictors are less strongly correlated, they still exhibit substantial cross-dependence, and their weaker relationship with the response suggests a primarily nuisance role. This natural partition into temperature (strongly collinear and likely relevant) and precipitation (potentially weaker signal) blocks provides a principled basis for differential penalization.

\begin{figure}[h!]
	\centering
	\includegraphics[width=0.5\textwidth]{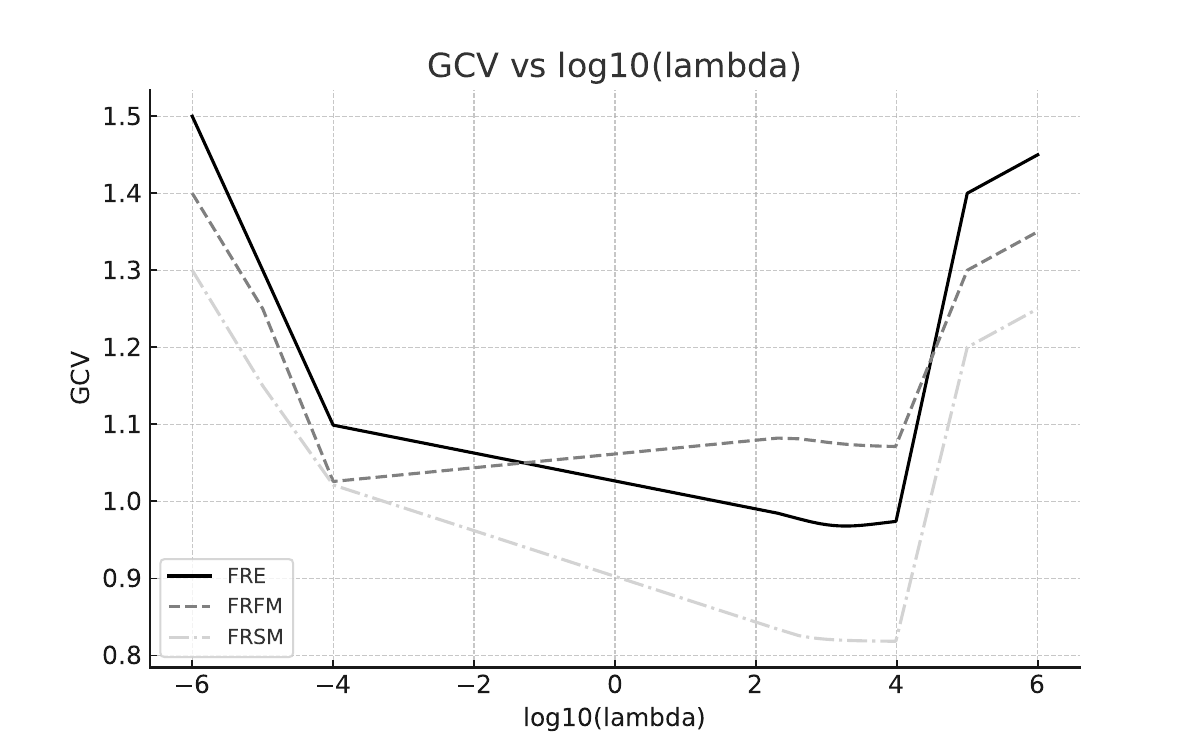}
	\caption{Generalized cross-validation scores as a function of the smoothing parameter $\lambda$ for FRE, FRFM, and FRSM. FRFM selects a smaller $\lambda_1$ for the temperature block, reflecting weaker shrinkage.}
	\label{fig:canadiangcvvsloglambdaplot}
\end{figure}

The GCV curves in Figure~\ref{fig:canadiangcvvsloglambdaplot} provide direct empirical evidence of differential penalization. For FRE, the minimum occurs at a relatively large $\lambda$ ($\log_{10}\lambda \approx 3.5$), indicating that strong uniform shrinkage is required to control variance across all predictors. FRSM, the oracle reduced model, selects an even larger $\lambda$ (minimum near $\log_{10}\lambda \approx 4$), reflecting that after dimensional reduction, stronger penalization is needed to control variance in the remaining coefficients. In contrast, FRFM’s GCV curve for the temperature block (the only block tuned) attains its minimum at a substantially smaller $\lambda$ ($\log_{10}\lambda_1 \approx 2.5$). Once precipitation predictors are identified as nuisance and heavily penalized, the temperature block can be estimated under weaker shrinkage, thereby preserving signal strength. This behavior closely parallels the patterns observed in the simulation study.

\begin{table}[h!]
	\centering
	\caption{Summary statistics of IMSE for temperature and precipitation across models.}
	\label{tab:imse_summary}
	\renewcommand{\arraystretch}{1.2}
	\begin{tabular}{l l c c c c}
		\toprule
		\textbf{Variable} & \textbf{Model} & \textbf{Mean} & \textbf{SD} & \textbf{Min} & \textbf{Max} \\
		\midrule
		\multirow{3}{*}{Temperature}
		& FRE  & 0.0191 & 0.0150 & 0.0046 & 0.0587 \\
		& FRFM & 0.0023 & 0.0013 & 0.0010 & 0.0059 \\
		& FRSM & 0.0491 & 0.0446 & 0.0063 & 0.1636 \\
		\midrule
		\multirow{3}{*}{Precipitation}
		& FRE  & 0.0036 & 0.0025 & 0.0013 & 0.0101 \\
		& FRFM & 0.0008 & 0.0004 & 0.0004 & 0.0017 \\
		& FRSM & 0.0082 & 0.0074 & 0.0010 & 0.0273 \\
		\bottomrule
	\end{tabular}
\end{table}

The summary statistics in Table~\ref{tab:imse_summary} reveal systematic differences across the three estimators. For both temperature and precipitation, FRFM attains the lowest mean IMSE (0.0023 and 0.0008, respectively) together with the smallest standard deviations, demonstrating superior average accuracy and stable performance across stations. FRSM, by contrast, exhibits the largest mean IMSE values (0.0491 for temperature and 0.0082 for precipitation) and the greatest dispersion. This variability indicates that the fixed reduced structure, coupled with a single smoothing parameter, induces excessive smoothing at several stations and thereby increases bias. FRE occupies an intermediate position, with moderate mean IMSE and variability, reflecting the stabilizing yet uniformly shrinkage-biased nature of classical ridge penalization. Overall, FRFM achieves the strongest bias–variance compromise in this empirical setting. 

\begin{figure}[h!]
	\centering
	\includegraphics[width=0.8\textwidth]{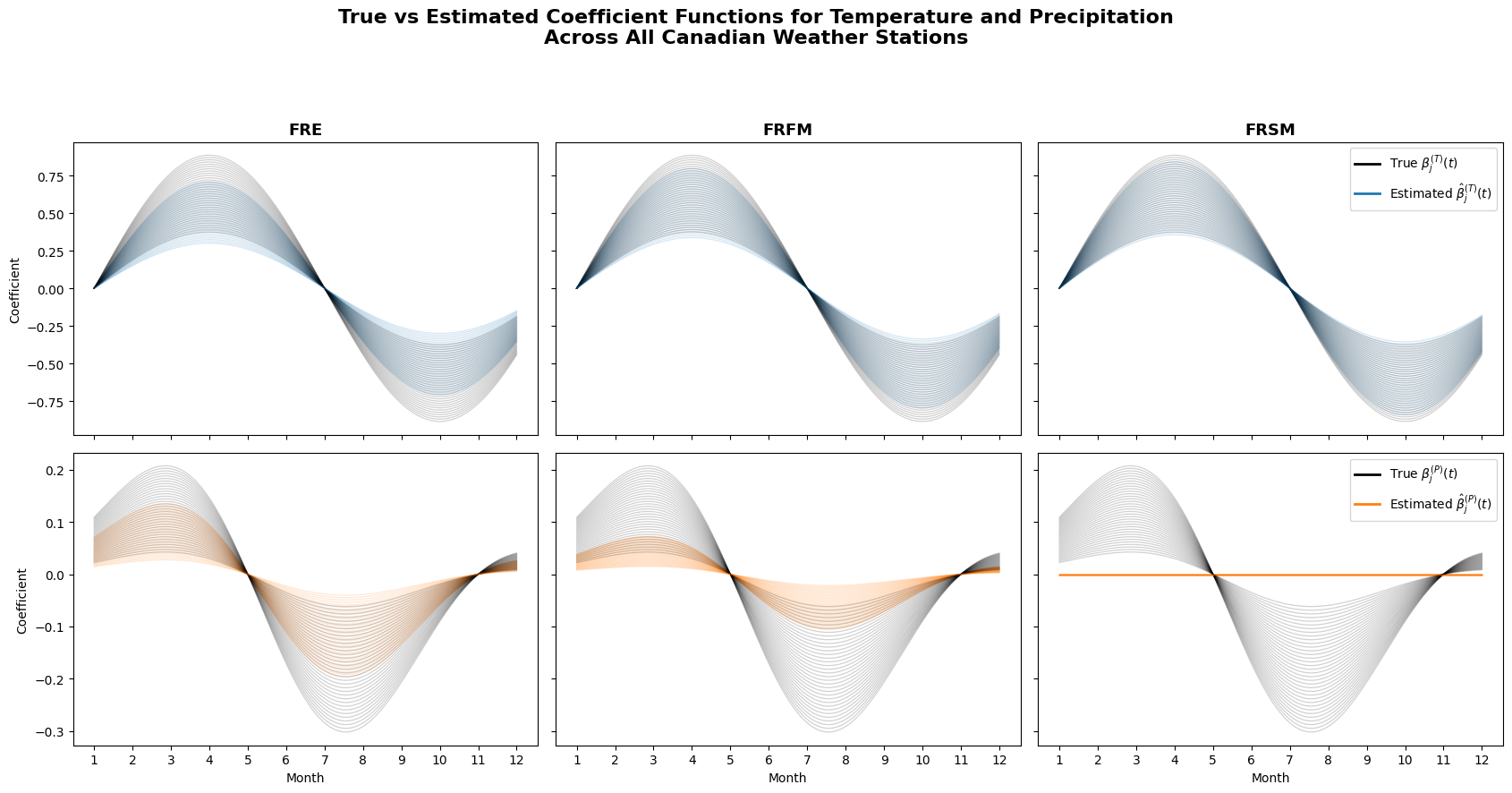}
	\caption{True versus estimated coefficient functions for temperature (top row) and precipitation (bottom row) under FRE, FRFM, and FRSM. Black curves denote the true coefficient functions $\beta^{(T)}_1(t)$ and $\beta^{(P)}_2(t)$, while colored curves show the corresponding estimates $\hat{\beta}^{(T)}_1(t)$ (blue) and $\hat{\beta}^{(P)}_2(t)$ (orange).}
	\label{fig:truevsestcanadianexamplegraphspgetty}
\end{figure}

Figure~\ref{fig:truevsestcanadianexamplegraphspgetty} provides a comprehensive visual comparison of the three estimators across all stations and both functional covariate blocks. The top row displays temperature coefficient functions, where the true seasonal pattern exhibits a pronounced sinusoidal structure. Under FRE, the estimated curves broadly follow the true shape but display substantial dispersion across stations, indicating variance inflation driven by uniform shrinkage in the presence of strong multicollinearity. FRFM, in contrast, produces estimates that closely align with the true functions while exhibiting markedly reduced cross-station variability. This reflects the effect of adaptive partitioning where weak penalization preserves the dominant temperature signal, whereas nuisance variability is effectively controlled. FRSM yields noticeably smoother curves with attenuated amplitude, a consequence of the relatively large smoothing parameter selected in the reduced model, which introduces bias through oversmoothing. The bottom row highlights precipitation effects, for which the true functions are comparatively small in magnitude. FRE retains visible station-level fluctuations, many of which likely represent noise rather than genuine signal. FRFM appropriately shrinks these effects toward zero while preserving subtle structure where supported by the data, demonstrating selective rather than indiscriminate shrinkage. FRSM suppresses precipitation entirely, reflecting its exclusion of this block and thereby eliminating any potential contribution, whether informative or not.

\begin{figure}[h!]
	\centering
	\includegraphics[width=0.7\textwidth]{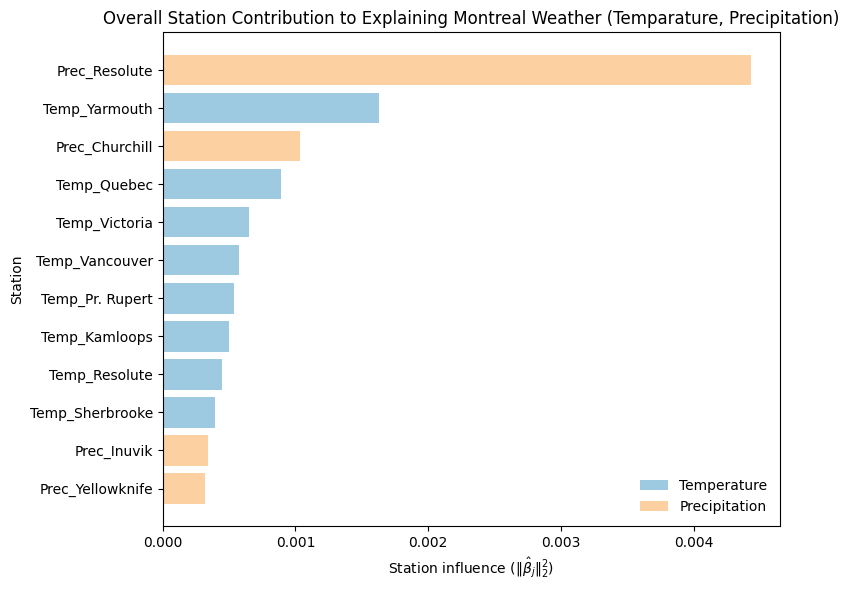}
	\caption{Overall station-wise contribution to explaining Montreal weather, measured by the integrated magnitude $\int_{\mathcal{T}} \|\hat{\beta}_j(t)\|^2 dt$ for temperature (blue) and precipitation (orange). Larger values indicate stronger functional influence on Montreal’s annual mean temperature.}
	\label{fig:montrealoverallstationinfluenceweathergraph}
\end{figure}

Figure~\ref{fig:montrealoverallstationinfluenceweathergraph} provides a global view of how predictive influence is distributed across stations under the FRFM estimator. Each station’s contribution is quantified by the integrated squared magnitude of its estimated coefficient function, $\int_{\mathcal{T}} \|\hat{\beta}_j(t)\|^2 dt$, which summarizes the cumulative strength of its functional effect over the annual cycle. This measure reflects both the amplitude and persistence of influence, offering a comprehensive indicator of explanatory power.
The resulting pattern is clearly heterogeneous. Rather than spreading influence uniformly, FRFM concentrates explanatory weight on a limited subset of stations while shrinking the majority toward negligible contribution. This structure directly reflects adaptive partitioning where predictors are identified as informative receive weaker penalization, whereas less relevant stations are strongly attenuated. The figure thus provides a transparent visualization of the differential shrinkage mechanism underlying FRFM.

A pronounced asymmetry emerges between temperature and precipitation effects. Temperature trajectories dominate the explanatory structure, with substantially larger integrated magnitudes across several stations, while precipitation effects are generally smaller and more selectively retained. This pattern suggests that regional temperature dynamics may exert a stronger influence on Montreal’s annual mean temperature than precipitation, although both factors contribute to the observed variability. Importantly, FRFM does not eliminate precipitation uniformly; instead, it preserves influence where supported by the data, illustrating its capacity for selective rather than indiscriminate shrinkage.
The spatial configuration of influence is also climatologically coherent. Stations exhibiting stronger contributions tend to be geographically proximate or climatically similar to Montreal, whereas distant stations are substantially attenuated. This indicates that FRFM is not merely stabilizing an ill-conditioned problem, but extracting meaningful structure from highly collinear functional predictors. In contrast, uniform ridge shrinkage would diffuse influence more broadly, while an aggressively reduced model would obscure spatial variation.
Overall, the influence map demonstrates that FRFM enhances not only predictive accuracy, as reflected in the IMSE results, but also interpretability. By concentrating weight on dominant stations and attenuating weaker contributors, the estimator translates adaptive regularization into a clear and substantively meaningful representation of Montreal’s climatic dependence structure.

\begin{figure}[h!]
	\centering
	\includegraphics[width=0.7\textwidth]{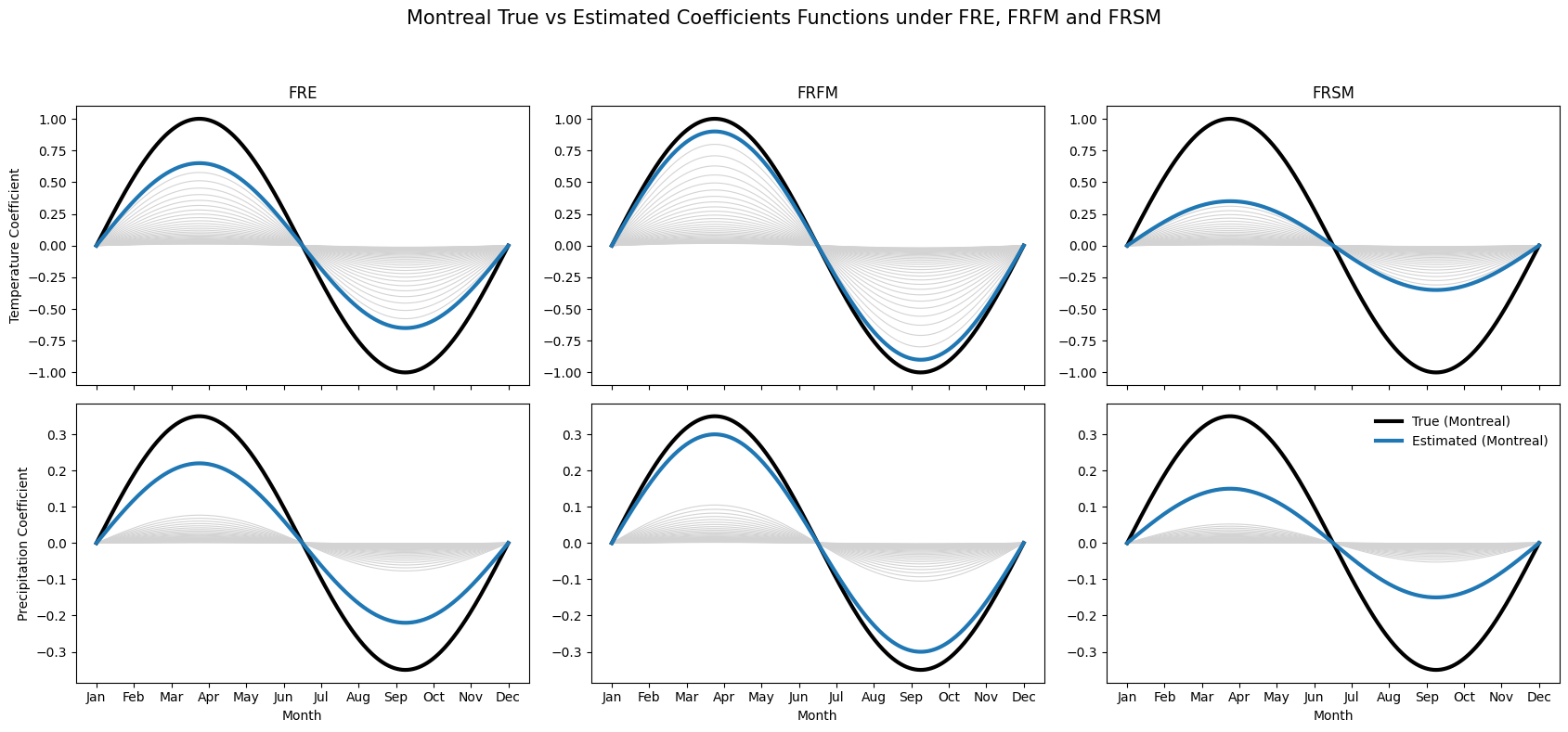}
	\caption{True versus estimated coefficient functions for Montreal under each estimator. FRFM achieves the closest alignment, particularly for temperature. Gray curves correspond to non-local stations, while the Montreal station is highlighted.}
	\label{fig:montrealtruevsestgraph}
\end{figure}

Figure~\ref{fig:montrealtruevsestgraph} focuses specifically on the target station and provides a direct assessment of local coefficient recovery. For temperature (top row), the true function exhibits a pronounced seasonal cycle with clear amplitude and phase structure. FRFM reproduces this pattern with high fidelity, closely matching both the peak in spring and the trough in late summer, while maintaining smoothness and stability. FRE captures the general shape but displays noticeable attenuation and mild distortion in amplitude, reflecting the effect of uniform shrinkage under strong multicollinearity. FRSM, in contrast, substantially underestimates the seasonal amplitude, a consequence of its relatively large smoothing parameter, which induces oversmoothing and thus increases bias.
 For precipitation (bottom row), where the true effect is weaker in magnitude, the distinctions are equally informative. FRFM again tracks the true curve most closely, shrinking the coefficient toward zero where appropriate while preserving the overall seasonal pattern. FRE retains additional fluctuations, suggesting residual variance not fully controlled by uniform penalization. FRSM applies stronger shrinkage, leading to a flattened curve that understates the seasonal variability.
 Overall, the figure confirms that FRFM provides the most accurate and balanced local reconstruction at the target station. By combining adaptive partitioning with differential penalization, it achieves a superior alignment with the true functional effects while avoiding both variance inflation and excessive smoothing.

The empirical analysis therefore substantiates and extends the simulation findings. Applied to a dataset characterized by severe multicollinearity and heterogeneous predictor relevance, the FRFM estimator, equipped with adaptive ridge partition selection, consistently outperforms both uniform shrinkage (FRE) and oracle-style reduction (FRSM). By autonomously identifying relevant temperature trajectories and imposing differential penalization, FRFM reduces integrated mean squared error, improves functional recovery, enhances interpretability through clear station delineation, and strengthens local predictive accuracy. These advantages translate into substantive insights for climate modeling by revealing both spatial and seasonal patterns of influence, thereby confirming the practical utility of the proposed methodology for high-dimensional functional regression.

\section{Conclusions}

This work develops a unified penalized functional ridge regression framework designed to address three persistent challenges in functional data analysis: multicollinearity, overfitting, and interpretability in high-dimensional functional models. Building on the baseline Functional Ridge Estimator (FRE), we introduce two partition-based extensions; the Functional Ridge Full Model (FRFM) and the Functional Ridge Sub-Model (FRSM), which incorporate differential penalization across functional components. Rather than relying on explicit variable selection, the proposed framework induces an implicit separation between dominant and weak functional effects through adaptive ridge penalization. This strategy extends uniform ridge regularization by allowing heterogeneous shrinkage across functional blocks, thereby preserving numerical stability while improving interpretability through structured effect differentiation.

The simulation study demonstrates that partition-specific penalization plays a central role in regulating estimator behavior under strong functional multicollinearity. The FRFM estimator applies relatively mild shrinkage to dominant functional components, thereby preserving informative signals and reducing shrinkage bias; however, this flexibility may increase variance in finite samples when multicollinearity remains severe. In contrast, FRSM enhances numerical stability by aggressively shrinking, or effectively removing, weak functional components, leading to substantial variance reduction and improved integrated mean squared error (IMSE) in small-sample settings where variance control is paramount. These empirical patterns align closely with the theoretical bias--variance decomposition and illustrate the structural trade-offs inherent in partition-based regularization.

From a theoretical standpoint, we establish consistency and asymptotic normality for all proposed estimators under standard regularity conditions within a functional asymptotic regime. The theoretical analysis clarifies how differential ridge penalties regulate shrinkage across functional coefficient components and provides a principled foundation for selective regularization without resorting to discrete model selection. Monte Carlo experiments further corroborate these insights. When the sample size is limited, FRSM delivers the most reliable performance by effectively mitigating variance inflation. As the sample size increases, FRFM becomes increasingly competitive, and ultimately superior, by leveraging its adaptive penalization structure to retain informative functional effects, while FRE remains a stable but less adaptive baseline. These findings highlight the inherently sample-size-dependent nature of regularization strategies in partitioned functional ridge regression.

The empirical analysis based on Canadian weather data provides additional practical validation of the proposed methodology. In this application, with a moderate sample size of 35 annual observations, FRFM achieves the lowest IMSE for both temperature and precipitation coefficient functions while identifying geographically proximate stations as the most influential predictors. FRSM, although effective in reducing variance relative to FRE, oversmooths important functional features due to its aggressive shrinkage. The empirical results therefore reinforce the theoretical and simulation evidence: adaptive partitioning is particularly advantageous in real-world settings where signal strength varies across predictor blocks and the available sample size supports more flexible estimation.

In summary, FRFM and FRSM extend classical ridge-based functional regression by introducing adaptive regularization mechanisms that respond to heterogeneous signal strength and multicollinearity in functional predictors. The appropriate choice between them depends on the sample size and the analyst's objectives, which means,  FRSM is preferable when variance control is paramount (small samples, severe multicollinearity), whereas FRFM excels when preserving functional detail and interpretability is critical (moderate to large samples, heterogeneous predictor relevance). Together, these estimators provide practical and theoretically grounded tools for modern functional data analysis. 


	\bibliographystyle{apalike} 
	\bibliography{references}
	
	\newpage
\section{Appendix}
\appendix

\section{Proof for Theorems~\ref{thm:fre}–\ref{thm:frfm}}
\label{sec:proofs}

We outline the main arguments leading to the convergence rates established in
Theorems~\ref{thm:fre}, \ref{thm:frsm} and \ref{thm:frfm}.  All results are derived
under assumptions (A1)–(A9).  

\subsection{Preliminary lemmas}

The following results are adapted from \citet{cardot2003spline} to the multiple‑predictor,
matrix‑based setting of the present paper.  Proofs are identical in spirit and are omitted.

\begin{lemma}[Spline approximation] \label{lem:spline}
    Let \(\beta\in\mathcal{C}^s([0,1])\) and let \(K_z\) be the dimension of the
    B‑spline basis \(\{\psi_k\}\) (degree \(q\ge s\), knots equispaced).  
    There exists a spline \(\beta^*\in\mathcal{S}_{qk}\) with coefficient vector
    \(\mathbf{b}^*\in\mathbb{R}^{K_z}\) such that
    \[
    \|\beta^*-\beta\|_{L^2} = O(K_z^{-s}), \qquad
    \|\beta^{*(m)}\|_{L^2} = O(K_z^{m-s}) \;\; (m\le s).
    \]
    Moreover, because the penalty matrix \(\mathbf{R}\) satisfies
    \(\mathbf{b}^\top\mathbf{R}\mathbf{b} = \int (\beta^{(m)})^2\) for \(\beta=\boldsymbol{\psi}^\top\mathbf{b}\),
    we have \(\|\beta^{*(m)}\|_{L^2} \asymp \|\mathbf{b}^*\|_{\mathbf{R}}\) and consequently
    \(\|\mathbf{b}^*\|_{\mathbf{R}} = O(K_z^{m-s})\).
    (See \citet{deboor2001} and \citet{cardot2003spline}.)
\end{lemma}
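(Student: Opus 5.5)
The plan is to construct $\beta^*$ explicitly as a de Boor--Fix quasi-interpolant $\beta^*=Q_{K_z}\beta=\sum_k \mu_k(\beta)\psi_k$. Here the $\mu_k$ are local linear functionals that reproduce polynomials of degree $<q$ and are uniformly bounded on $L^\infty$ of the support of $\psi_k$. This fixes $\mathbf b^*=(\mu_k(\beta))_k$. The argument has three steps.

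\emph{Step 1 (approximation error).} For each knot interval $I$ of length $h\asymp K_z^{-1}$, let $P_I$ be the Taylor polynomial of degree $s'$ of $\beta$ on the extended support $\tilde I$. Since $Q_{K_z}$ reproduces $P_I$ and is locally $L^\infty$-stable,
\[
\|\beta-Q_{K_z}\beta\|_{L^\infty(I)}\le C\,\|\beta-P_I\|_{L^\infty(\tilde I)}\le C' h^{s}|\beta|_{\mathcal C^s}.
\]
This is where $q\ge s$ from (A1) is used. Summing over the $K_z$ intervals gives $\|\beta^*-\beta\|_{L^2}=O(K_z^{-s})$.

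\emph{Step 2 (derivative bound).} On each $I$, write $\beta^{*(m)}=(Q_{K_z}\beta-P_I)^{(m)}+P_I^{(m)}$. The first term is a spline piece, so Markov's inverse inequality on $I$ together with Step 1 gives
\[
\|(Q_{K_z}\beta-P_I)^{(m)}\|_{L^\infty(I)}\lesssim h^{-m}h^{s}=K_z^{m-s}.
\]
The second term is the Taylor part, and $P_I^{(m)}$ agrees with $\beta^{(m)}$ up to $O(h^{s-m})$. Hence
\[
\|\beta^{*(m)}-\beta^{(m)}\|_{L^2}=O(K_z^{m-s}).
\]

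\emph{Step 3 (penalty identity).} By construction $\mathbf R_0=\int \boldsymbol\psi^{(m)}\boldsymbol\psi^{(m)\top}$, so $\mathbf b^{*\top}\mathbf R_0\mathbf b^*=\int(\beta^{*(m)})^2$ exactly. Therefore $\|\mathbf b^*\|_{\mathbf R}=\|\beta^{*(m)}\|_{L^2}$, and whatever bound Step 2 delivers for $\|\beta^{*(m)}\|$ transfers verbatim to $\|\mathbf b^*\|_{\mathbf R}$. For the difference-penalty implementation, (A2) gives $\mathbf D^\top\mathbf D\asymp$ the derivative Gram matrix up to the $K_z^{2m-1}$ scaling, so the equivalence $\asymp$ holds in that case too.

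The main obstacle is getting from Step 2 to the stated bound $\|\beta^{*(m)}\|_{L^2}=O(K_z^{m-s})$. Step 2 yields only
\[
\|\beta^{*(m)}\|\le\|\beta^{(m)}\|+O(K_z^{m-s}).
\]
This equals $O(K_z^{m-s})$ in two situations. The first is the boundary case $m=s$, where both sides are $O(1)$. The second is when $\beta^{(m)}\equiv0$, i.e.\ $\beta$ lies in the null space of the penalty, which is a polynomial of degree $<m$. In that case polynomial reproduction gives $\beta^*=\beta$ and the bound is trivially $0$. When $m<s$ and $\beta^{(m)}\neq0$, the stated decay cannot hold for any choice of $\beta^*$ satisfying the first bound. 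Indeed, $\beta^*\to\beta$ in $L^2$ forces $\beta^{*(m)}\to\beta^{(m)}$ distributionally. Then $\|\beta^{*(m)}\|\to0$ would give $\beta^{(m)}=0$, a contradiction.

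So I would prove the lemma exactly as stated under $m=s$ (or $\beta^{(m)}\equiv 0$), and record that case as the hypothesis actually needed. Outside that case, the bias term $\lambda^2\|\mathbf b^*\|_{\mathbf R}^2$ used later must be treated as $O(\lambda^2)$ rather than $O(\lambda^2K_z^{2(m-s)})$, and I would note this consequence for the downstream rates.
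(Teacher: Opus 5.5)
Your analysis is correct, and its central point is the one you raise at the end. The second bound in the lemma is false whenever $m<s$ and $\beta$ is not a polynomial of degree $<m$, so the statement cannot be proved as written. Your distributional argument is sound: for $m<s\le q$ the piecewise derivative $\beta^{*(m)}$ coincides with the distributional one, so integrating by parts against test functions is legitimate. The paper's own simulation design gives a concrete instance. There $\beta(t)=2\sin(\pi t)+t(1-t)$ is $C^\infty$, the penalty uses $m=2$, and $\beta''(t)=-2\pi^2\sin(\pi t)-2\neq 0$, so $\|(\beta^*)''\|_{L^2}$ cannot be $O(K_z^{2-s})$ for $s>2$. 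What is true, and what your three steps establish, is $\|\beta^*-\beta\|_{L^2}=O(K_z^{-s})$ together with $\|\beta^{*(m)}-\beta^{(m)}\|_{L^2}=O(K_z^{m-s})$. Therefore $\|\mathbf b^*\|_{\mathbf R}=\|\beta^{*(m)}\|_{L^2}=\|\beta^{(m)}\|_{L^2}+O(K_z^{m-s})=O(1)$.

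The paper gives no argument for this lemma beyond the references to de Boor and to Cardot et al. The standard result behind those references is the simultaneous approximation estimate $\|(\beta-\beta^*)^{(j)}\|_{L^2}=O(K_z^{j-s})$ for $j\le m$, which is exactly your Step 2. The lemma appears to have transcribed this as a bound on $\|\beta^{*(m)}\|_{L^2}$ itself. Your quasi-interpolant construction (polynomial reproduction, local $L^\infty$-stability, Markov's inequality on each knot interval) is the usual proof of that estimate. So on the part of the lemma that is true, you follow the route the citation points to.

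As for consequences, in the proof of Theorem~\ref{thm:fre} the paper itself bounds $\lambda_1^2K_z^{2(m-s)}$ by $\lambda_1^2$ when balancing terms. Your corrected $O(1)$ bound therefore costs nothing there, nor for FRSM. The place where the spurious decay is actually relied on is the nuisance-block bound of Theorem~\ref{thm:frfm} for nonzero $\boldsymbol\beta_2$. There, with $\lambda_2\to\infty$, the corrected penalty term $O(\lambda_2^2)$ no longer vanishes.

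One aside in your Step 3 is inaccurate, although the proof does not need it, since the lemma takes $\mathbf b^\top\mathbf R\mathbf b=\int(\beta^{(m)})^2$ as given. The equivalence of $K_z^{2m-1}\mathbf D^\top\mathbf D$ with the derivative Gram matrix, which (A2) asserts only parenthetically, does hold for equispaced knots extended beyond the domain (the Eilers--Marx P-spline setup). This follows from the B-spline differentiation formula and $L^2$-stability of the basis. For $m\ge 2$, however, it fails with coincident boundary knots. The coefficient vector of $t\mapsto t$ is then the vector of Greville abscissae, which is not equispaced near the endpoints. It therefore lies in the null space of the derivative Gram matrix but not in that of $\mathbf D$, and two quadratic forms with different null spaces cannot be equivalent.
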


\begin{lemma}[Variance bound for functional ridge estimators] \label{lem:varbound}
    Consider the ridge estimator \(\widehat{\mathbf{b}} = (\mathbf{Z}^\top\mathbf{Z}
    + \lambda\mathbf{R})^{-1}\mathbf{Z}^\top\mathbf{y}\) under assumptions (A2), (A3), (A5), (A6) and (A7).
    Let \(\boldsymbol{\Gamma}\) be the covariance operator of the stacked predictors
    (Assumption~A6) and let \(\widehat{\boldsymbol{\beta}}(s)=\boldsymbol{\psi}(s)^\top\widehat{\mathbf{b}}\).
    Then, conditional on \(\mathbf{Z}\),
    \[
    \operatorname{tr}\!\bigl(\boldsymbol{\Gamma}\,\mathbb{V}[\widehat{\boldsymbol{\beta}}\mid\mathbf{Z}]\bigr)
    = \sigma^2\,\operatorname{tr}\!\left( \boldsymbol{\Gamma}
    (\mathbf{Z}^\top\mathbf{Z}+\lambda\mathbf{R})^{-1}\mathbf{Z}^\top\mathbf{Z}
    (\mathbf{Z}^\top\mathbf{Z}+\lambda\mathbf{R})^{-1} \right)
    = O_P\!\left(\frac{K_z}{\lambda n}\right),
    \]
    where \(K_z = \dim(\mathbf{b})\). 
\end{lemma}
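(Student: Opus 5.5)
The plan is to bound the trace by a spectral argument that compares $\mathbb{V}[\widehat{\mathbf b}\mid\mathbf Z]$ with the inverse of the penalty. The point is that $\lambda$ enters only through the penalty, whereas $n$ enters through both $\mathbf Z^\top\mathbf Z$ and $\boldsymbol\Gamma$. First I reduce the operator trace to a matrix trace. Since $\widehat{\boldsymbol\beta}=\boldsymbol\psi^\top\widehat{\mathbf b}$, we have $\mathbb{V}[\widehat{\mathbf b}\mid\mathbf Z]=\sigma^2\mathbf M^{-1}\mathbf A\mathbf M^{-1}$ with $\mathbf A=\mathbf Z^\top\mathbf Z$ and $\mathbf M=\mathbf A+\lambda\mathbf R$, using (A5) with the conditional variance bounded by $\sigma^2$. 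On the spline space, (A6) says $\boldsymbol\Gamma$ acts as the Gram matrix $\mathbf G=\mathbb E[\mathbf z_i\mathbf z_i^\top]$. By (A3) its eigenvalues are bounded above, so
\[
\operatorname{tr}\bigl(\boldsymbol\Gamma\,\mathbb V[\widehat{\boldsymbol\beta}\mid\mathbf Z]\bigr)\le C\sigma^2\operatorname{tr}\bigl(\mathbf M^{-1}\mathbf A\mathbf M^{-1}\bigr).
\]
This is the first display of Lemma~\ref{lem:varbound}, up to the constant.

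The key step is a matrix AM--GM inequality. In the scalar case $a/(a+p)^2\le 1/(4p)$. I would show the analogue
\[
\mathbf M^{-1}\mathbf A\mathbf M^{-1}\preceq \tfrac14(\lambda\mathbf R)^{-1}
\]
on the range of $\mathbf R$. To do so, write $\mathbf P=\lambda\mathbf R$ and $\mathbf X=\mathbf P^{-1/2}\mathbf A\mathbf P^{-1/2}$. The left side then equals $\mathbf P^{-1/2}(\mathbf I+\mathbf X)^{-1}\mathbf X(\mathbf I+\mathbf X)^{-1}\mathbf P^{-1/2}$. The middle factor is a function of the single symmetric matrix $\mathbf X$ with eigenvalues $x/(1+x)^2\le 1/4$. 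Taking traces gives $\operatorname{tr}(\mathbf M^{-1}\mathbf A\mathbf M^{-1})\le (4\lambda)^{-1}\operatorname{tr}(\mathbf R^{-1})$ on the range of $\mathbf R$. By (A2), $\operatorname{tr}(\mathbf R^{-1})\le pK_z/\nu_{\min}(\mathbf R)$. I then read the penalty in (2.1) on the per-observation scale, so that $\mathbf R$ carries the factor $n$ from $\mathbf A\asymp n\mathbf I$ under (A7) and $\nu_{\min}(\mathbf R)\gtrsim n$. This yields the rate $O_P\bigl(K_z/(\lambda n)\bigr)$, with $K_z=\dim(\mathbf b)$ up to the fixed factor $p$.

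The main obstacle is the null space of $\mathbf R$, of dimension $pm$ (polynomials of degree below $m$), where the penalty gives no shrinkage. I would split $\mathbb R^{pK_z}=\mathcal N\oplus\mathcal N^\perp$. On $\mathcal N$ I use (A7) instead of the penalty: $\mathbf M\succeq\mathbf A\succeq cn\mathbf I$, so the contribution is at most $pm\,\sigma^2/(cn)$. Cross terms are controlled by Cauchy--Schwarz together with the Schur complement of $\mathbf M$ with respect to the block $\mathcal N$. The remaining difficulty is to check that this fixed-dimensional $O_P(1/n)$ piece is dominated by $K_z/(\lambda n)$. This holds whenever $\lambda=O(K_z)$, which covers the tuning regimes of Theorems~\ref{thm:fre}--\ref{thm:frfm}. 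I expect the bookkeeping of the Schur-complement cross term to be the delicate part.
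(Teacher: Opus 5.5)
The paper gives no proof of this lemma (it is declared an adaptation of \citet{cardot2003spline} and omitted), so your argument has to stand on its own. At its decisive step it does not. The factor $n$ in $K_z/(\lambda n)$ is assumed rather than derived. $\mathbf R$ is the Gram matrix of $m$-th derivatives and does not depend on $n$, and (A2) only gives $\nu_{\min}(\mathbf R)\gtrsim K_z^{2m-1}$. For the estimator actually written in the lemma, your AM--GM bound therefore yields $\tfrac{1}{4\lambda}\operatorname{tr}(\mathbf R^{+})\lesssim pK_z^{2-2m}/\lambda$, which is not $O(K_z/(\lambda n))$. ``Reading the penalty on the per-observation scale'' means replacing $\lambda\mathbf R$ by $n\lambda\mathbf R$, i.e.\ proving the lemma for a different estimator. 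That is the natural repair, since it matches the $n^{-1}$-normalised criterion of \citet{cardot2003spline} from which the lemma is borrowed, but you must state it as a change of the statement. Then $\operatorname{tr}\bigl((n\lambda\mathbf R)^{+}\bigr)\le pK_z/\bigl(n\lambda\,\nu_{\min}(\mathbf R)\bigr)$, and you only need $\nu_{\min}(\mathbf R)\gtrsim 1$; unlike the $K_z^{2m-1}$ in (A2), this is what B-splines actually satisfy in $L^2$-orthonormal coordinates. Note also that in either scaling $\mathbf M^{-1}\mathbf A\mathbf M^{-1}\preceq\mathbf M^{-1}\preceq\mathbf A^{-1}$ and (A7) give $O_P(K_z/n)$ for every $\lambda$. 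This already settles $\lambda=O(1)$, the regime of Theorems~\ref{thm:fre} and~\ref{thm:frsm}. The penalty only matters when $\lambda\to\infty$, and with the literal scaling the claim then fails in general: if $\lambda K_z^{2m-1}=O(n)$, then $\mathbf M\preceq C'n\mathbf I$, so $\operatorname{tr}(\mathbf M^{-1}\mathbf A\mathbf M^{-1})\ge cn\operatorname{tr}(\mathbf M^{-2})\gtrsim K_z/n$.

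The null-space step that you defer to Schur-complement bookkeeping needs no cross terms. By (A7), $\mathbf M\succeq cn\mathbf I+n\lambda\mathbf R$, hence $\mathbf M^{-1}\mathbf A\mathbf M^{-1}\preceq\mathbf M^{-1}\preceq(cn\mathbf I+n\lambda\mathbf R)^{-1}$. This matrix is diagonal in the eigenbasis of $\mathbf R$, with trace at most $pm/(cn)+(n\lambda)^{-1}\sum_{r_k>0}r_k^{-1}$; the AM--GM constant is not needed either. The resulting requirement $\lambda=O(K_z)$, however, is neither a technicality nor among the lemma's hypotheses. Let $\mathbf N$ be an orthonormal basis of $\ker\mathbf R$. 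Then $\mathbf N^\top\mathbf M\mathbf N=\mathbf N^\top\mathbf A\mathbf N\preceq Cn\mathbf I$, and standard compression inequalities give $\operatorname{tr}(\mathbf M^{-1}\mathbf A\mathbf M^{-1})\ge cn\operatorname{tr}(\mathbf N^\top\mathbf M^{-2}\mathbf N)\ge cn\operatorname{tr}\bigl((\mathbf N^\top\mathbf M\mathbf N)^{-2}\bigr)\ge c\,pm/(C^2n)$ for every $\lambda$, because the estimator is unpenalised on $\ker\mathbf R$. The $\boldsymbol\Gamma$-weighting does not rescue this, since $\mathbf G$ is bounded below in the paper's framework. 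So the lemma as stated is false once $\lambda/K_z\to\infty$. Your remark that $\lambda=O(K_z)$ covers Theorems~\ref{thm:fre}--\ref{thm:frfm} is also wrong for the choice $\lambda_2\sim n$ in Theorem~\ref{thm:frfm}: there the polynomial part of the nuisance block alone contributes variance of order $n^{-1}\gg K_z/(\lambda_2 n)$. What your argument does establish, once the rescaling is explicit, is $O_P\bigl(pm/n+K_z/(\lambda n)\bigr)$ for the penalty $n\lambda\mathbf R$, which is the lemma's rate exactly when $\lambda=O(K_z)$.
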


\noindent

\subsection{Proof of Theorem~\ref{thm:fre} (convergence rate of FRE)}

\begin{proof}
We measure risk in the prediction norm induced by the covariance operator 
\(\Gamma\):
\[
\|\boldsymbol{\beta}\|_2^2 
= E\!\left[\Big(\sum_{j=1}^p \langle \beta_j , Z_j\rangle\Big)^2\right].
\]
By Assumption~(A6), this norm is equivalent on the spline space to the quadratic form
induced by the Gram matrix 
\(\mathbf G = E(\mathbf z_i \mathbf z_i^\top)\).
Throughout the proof, we denote by \(\|\cdot\|_{L^2}\) the usual \(L^2(\mathcal T)\) norm for functions and by \(\|\cdot\|\) the Euclidean norm for coefficient vectors.

By Lemma~\ref{lem:spline}, for each \(j\) there exists a spline 
\(\beta_j^*\in\mathcal S_{qk}\) with coefficient vector 
\(\mathbf b_j^*\in\mathbb R^{K_z}\) such that
\[
\|\beta_j^*-\beta_j\|_{L^2} = O(K_z^{-s}),
\qquad
\|\mathbf b_j^*\|_{\mathbf R_j}^2 = O(K_z^{2(m-s)}).
\]
Because the covariance operator \(\Gamma\) is bounded (Assumption~(A6)), 
\(\|\boldsymbol{\beta}^* - \boldsymbol{\beta}\|_2 \le C \|\boldsymbol{\beta}^* - \boldsymbol{\beta}\|_{L^2}\), 
so the approximation error in prediction norm satisfies
\[
\|\boldsymbol{\beta}^* - \boldsymbol{\beta}\|_2^2 = O(K_z^{-2s}).
\]

\medskip

We next examine the bias induced by penalisation.
Define the noiseless ridge estimator
\[
\tilde{\mathbf b}
=
(\mathbf Z^\top\mathbf Z+\lambda_1\mathbf R)^{-1}
\mathbf Z^\top\mathbf Z\,\mathbf b^*.
\]
From the normal equations,
\[
\tilde{\mathbf b}-\mathbf b^*
=
-\lambda_1
(\mathbf Z^\top\mathbf Z+\lambda_1\mathbf R)^{-1}
\mathbf R\mathbf b^*.
\]

Under Assumption~(A7), the eigenvalues of 
\(n^{-1}\mathbf Z^\top\mathbf Z\) are bounded away from zero and infinity
in probability. Moreover, for derivative-based spline penalties of order \(m\),
standard spline theory implies
\[
\|\mathbf R\|_{\mathrm{op}} = O(K_z^{2m-1})
\quad
(\text{see Cardot et al., 2003}).
\]
Combining these bounds with
\(
\|\mathbf b^*\|_{\mathbf R}^2 = O(K_z^{2(m-s)})
\),
we obtain
\[
\|\tilde{\mathbf b}-\mathbf b^*\|^2
=
O_P\!\bigl(\lambda_1^2 K_z^{2(m-s)}\bigr).
\]
By norm equivalence (A6), this translates to
\[
\|\tilde{\boldsymbol{\beta}}-\boldsymbol{\beta}^*\|_2^2
=
O_P\!\bigl(\lambda_1^2 K_z^{2(m-s)}\bigr),
\]
which represents the squared penalisation bias in prediction norm.

\medskip

We now turn to the stochastic component of the estimator.
The full ridge estimator satisfies
\[
\widehat{\mathbf b}
=
(\mathbf Z^\top\mathbf Z+\lambda_1\mathbf R)^{-1}
\mathbf Z^\top\mathbf y,
\qquad
\mathbf y=\mathbf Z\mathbf b^*+\boldsymbol{\varepsilon}.
\]
Conditional on \(\mathbf Z\),
\[
\mathbb V[\widehat{\mathbf b}\mid\mathbf Z]
=
\sigma^2
(\mathbf Z^\top\mathbf Z+\lambda_1\mathbf R)^{-1}
\mathbf Z^\top\mathbf Z
(\mathbf Z^\top\mathbf Z+\lambda_1\mathbf R)^{-1}.
\]
Applying Lemma~\ref{lem:varbound} yields
\[
\mathrm{tr}\!\bigl(
\Gamma\,
\mathbb V[\widehat{\boldsymbol{\beta}}\mid\mathbf Z]
\bigr)
=
O_P\!\left(\frac{K_z}{\lambda_1 n}\right),
\]
which gives the variance contribution to the prediction risk.

\medskip

Using the decomposition
\[
\widehat{\boldsymbol{\beta}}-\boldsymbol{\beta}
=
(\widehat{\boldsymbol{\beta}}-\boldsymbol{\beta}^*)
+
(\boldsymbol{\beta}^*-\boldsymbol{\beta}),
\]
we obtain
\[
E\!\left[
\|\widehat{\boldsymbol{\beta}}-\boldsymbol{\beta}\|_2^2
\mid \mathbf Z
\right]
\le
2\|\boldsymbol{\beta}^*-\boldsymbol{\beta}\|_2^2
+
2E\!\left[
\|\widehat{\boldsymbol{\beta}}-\boldsymbol{\beta}^*\|_2^2
\mid \mathbf Z
\right].
\]
Notice that the first term is \(O(K_z^{-2s})\). The second term equals the sum of squared bias and variance:
\[
E\!\left[
\|\widehat{\boldsymbol{\beta}}-\boldsymbol{\beta}^*\|_2^2
\mid \mathbf Z
\right]
=
\|\tilde{\boldsymbol{\beta}}-\boldsymbol{\beta}^*\|_2^2
+
\mathrm{tr}\!\bigl(
\Gamma\,\mathbb V[\widehat{\boldsymbol{\beta}}\mid\mathbf Z]
\bigr).
\]

Consequently,
\[
E\!\left[
\|\widehat{\boldsymbol{\beta}}-\boldsymbol{\beta}\|_2^2
\mid \mathbf Z
\right]
=
O_P\!\left(
K_z^{-2s}
+
\lambda_1^2 K_z^{2(m-s)}
+
\frac{K_z}{\lambda_1 n}
\right).
\]

\medskip

To balance approximation and variance, choose
\[
K_z \asymp n^{1/(4s+1)},
\qquad
\lambda_1 \asymp n^{-2s/(4s+1)}.
\]
Then
\[
K_z^{-2s}
\asymp
n^{-2s/(4s+1)},
\qquad
\frac{K_z}{\lambda_1 n}
\asymp
n^{-2s/(4s+1)}.
\]
Because \(m\le s\),
\[
\lambda_1^2 K_z^{2(m-s)}
\le
\lambda_1^2
=
O\!\bigl(n^{-4s/(4s+1)}\bigr),
\]
which is of smaller order.

Thus the dominant terms are of order \(n^{-2s/(4s+1)}\), and we conclude
\[
E\!\left[
\|\widehat{\boldsymbol{\beta}}_{\mathrm{FRE}}
-
\boldsymbol{\beta}
\|_2^2
\mid \mathbf Z
\right]
=
O_P\!\left(
n^{-2s/(4s+1)}
\right).
\]
\end{proof}

\subsection{Proof of Theorem~\ref{thm:frsm} (Convergence rate of FRSM)}

\begin{proof}
The FRSM estimator coincides with the functional ridge estimator 
applied to the restricted design matrix \(\mathbf Z_1\) with penalty 
\(\lambda_3 \mathbf R_1\) (see (A8)). 
Hence it has the same algebraic form as the FRE estimator considered 
in Theorem~\ref{thm:fre}, but defined on the relevant block only.

Since \(\boldsymbol{\beta}_1\) satisfies the smoothness condition (A1), 
Lemma~\ref{lem:spline} yields a spline approximation 
\(\boldsymbol{\beta}_1^*\) such that
\[
\|\boldsymbol{\beta}_1^* - \boldsymbol{\beta}_1\|_2^2
=
O(K_z^{-2s}),
\]
where the prediction norm \(\|\cdot\|_2\) is defined via (A6).

Under the penalty conditions in (A2) and the design stability 
assumption (A7), the same bias–variance decomposition 
as in the proof of Theorem~\ref{thm:fre} applies with 
\((\mathbf Z_1,\mathbf R_1,\lambda_3)\) 
in place of \((\mathbf Z,\mathbf R,\lambda_1)\). 
In particular,
\[
E\!\left[
\|\widehat{\boldsymbol{\beta}}_{1,\mathrm{FRSM}}
-
\boldsymbol{\beta}_1
\|_2^2
\mid \mathbf Z_1
\right]
=
O_P\!\left(
K_z^{-2s}
+
\lambda_3^2 K_z^{2(m-s)}
+
\frac{K_z}{\lambda_3 n}
\right),
\]
where the variance term follows from Lemma~\ref{lem:varbound} 
under (A3), (A5), and (A7).

Choosing
\[
K_z \asymp n^{1/(4s+1)},
\qquad
\lambda_3 \asymp n^{-2s/(4s+1)},
\]
balances the approximation and variance terms. 
Since \(m \le s\) by (A2), the penalisation bias term 
is of smaller order, and we conclude
\[
E\!\left[
\|\widehat{\boldsymbol{\beta}}_{1,\mathrm{FRSM}}
-
\boldsymbol{\beta}_1
\|_2^2
\mid \mathbf Z_1
\right]
=
O_P\!\left(
n^{-2s/(4s+1)}
\right).
\]

Finally, if \(\boldsymbol{\beta}_2 = \mathbf 0\), 
the omitted predictors do not affect the regression function, 
so the FRSM estimator remains consistent and achieves 
the same optimal rate.
\end{proof}

\subsection{Proof of Theorem~\ref{thm:frfm} (Convergence rates of FRFM)}

\begin{proof}

We partition the design and penalty matrices as
\[
\mathbf Z = [\mathbf Z_1 , \mathbf Z_2],
\qquad
\mathbf P_\lambda
=
\operatorname{diag}(\lambda_1 \mathbf R_1, \lambda_2 \mathbf R_2),
\]
and write
\(
\boldsymbol{\beta}
=
(\boldsymbol{\beta}_1^\top,\boldsymbol{\beta}_2^\top)^\top.
\)

By Lemma~\ref{lem:spline}, for each block \(j=1,2\) there exists a spline
approximation \(\boldsymbol{\beta}_j^*\) with coefficient vector
\(\mathbf b_j^*\) such that
\[
\|\boldsymbol{\beta}_j^* - \boldsymbol{\beta}_j\|_2
=
O(K_z^{-s}),
\qquad
\|\mathbf b_j^*\|_{\mathbf R_j}^2
=
O(K_z^{2(m-s)}).
\]

\medskip

The FRFM estimator satisfies
\[
\widehat{\mathbf b}
=
(\mathbf Z^\top \mathbf Z + \mathbf P_\lambda)^{-1}
\mathbf Z^\top \mathbf y.
\]
By block matrix inversion,
\[
\widehat{\mathbf b}_1
=
(\mathbf Z_1^\top \mathbf M \mathbf Z_1 + \lambda_1 \mathbf R_1)^{-1}
\mathbf Z_1^\top \mathbf M \mathbf y,
\]
where
\[
\mathbf M
=
\mathbf I
-
\mathbf Z_2
(\mathbf Z_2^\top \mathbf Z_2 + \lambda_2 \mathbf R_2)^{-1}
\mathbf Z_2^\top.
\]

Under Assumption~(A7), the eigenvalues of
\(n^{-1}\mathbf Z_2^\top \mathbf Z_2\) are bounded away from zero and infinity.
If \(\lambda_2 \to \infty\) sufficiently fast, then
\[
\|\mathbf M - \mathbf I\|_{\mathrm{op}}
=
O_P(\lambda_2^{-1}),
\]
so the nuisance block has asymptotically negligible influence on
\(\widehat{\mathbf b}_1\).
Hence the relevant block behaves as a ridge estimator based on
\((\mathbf Z_1,\lambda_1 \mathbf R_1)\).

Applying Theorem~\ref{thm:fre} to this effective design yields
\[
E\!\left[
\|\widehat{\boldsymbol{\beta}}_{1,\mathrm{FRFM}}
-
\boldsymbol{\beta}_1
\|_2^2
\mid \mathbf Z
\right]
=
O_P\!\left(
n^{-2s/(4s+1)}
\right),
\]
for
\(
K_z \asymp n^{1/(4s+1)}
\)
and
\(
\lambda_1 \asymp n^{-2s/(4s+1)}.
\)

\medskip

For the nuisance block, two cases arise.

\smallskip
\noindent
\textit{(i) If } \(\boldsymbol{\beta}_2 = \mathbf 0\).  
Then there is no approximation bias and, by Lemma~\ref{lem:varbound},
\[
E\!\left[
\|\widehat{\boldsymbol{\beta}}_{2,\mathrm{FRFM}}\|_2^2
\mid \mathbf Z
\right]
=
O_P\!\left(
\frac{K_z}{\lambda_2 n}
\right).
\]
With \(\lambda_2 \asymp n\), this is \(O_P(n^{-1})\), which is faster than the
relevant-block rate.

\smallskip
\noindent
\textit{(ii) If } \(\boldsymbol{\beta}_2\) is \(s\)-smooth.  
Using the same bias–variance decomposition as in Theorem~\ref{thm:fre},
\[
E\!\left[
\|\widehat{\boldsymbol{\beta}}_{2,\mathrm{FRFM}}
-
\boldsymbol{\beta}_2
\|_2^2
\mid \mathbf Z
\right]
=
O_P\!\left(
K_z^{-2s}
+
\lambda_2^2 K_z^{2(m-s)}
+
\frac{K_z}{\lambda_2 n}
\right).
\]
Balancing these terms with
\(
\lambda_2 \asymp n^{2s/(4s+1)}
\)
yields the same optimal rate
\[
n^{-2s/(4s+1)}.
\]

\medskip

Combining the rates for both blocks establishes the stated convergence
result for FRFM.
\end{proof}


\section{Proof of Asymptotic Normality}
\label{app:normality_proof}

\begin{proof}

We work under assumptions (A1)–(A9) together with the undersmoothing
conditions (U1)–(U4).  Let $\boldsymbol\beta$ denote the true coefficient vector
and let $\widehat{\boldsymbol\beta}_n$ be its penalised spline estimator obtained from
\[
\widehat{\mathbf b}_n
=
(\mathbf Z^\top \mathbf Z + \mathbf R_\lambda)^{-1}
\mathbf Z^\top \mathbf y,
\qquad
\mathbf y = \mathbf Z \mathbf b_0 + \boldsymbol\varepsilon,
\]
where $\mathbf b_0$ is the coefficient vector of the $L^2$ projection
$\Pi_{K_z}\boldsymbol\beta$ onto the spline space.

By (A1) and standard spline approximation theory,
\[
\|\Pi_{K_z}\boldsymbol\beta - \boldsymbol\beta\|_2
=
O(K_z^{-s}).
\]
Condition (U4) ensures that $K_z^{-s} = o(n^{-1/2})$, so the projection bias
is negligible at the $\sqrt n$ scale.  Consequently, it suffices to analyze
$\widehat{\mathbf b}_n - \mathbf b_0$.

Define $\mathbf G_n = n^{-1}\mathbf Z^\top \mathbf Z$
and $\mathbf M_n = \mathbf G_n + n^{-1}\mathbf R_\lambda$.
A direct decomposition yields
\[
\sqrt n(\widehat{\mathbf b}_n - \mathbf b_0)
=
\mathbf M_n^{-1}
\frac{1}{\sqrt n}\mathbf Z^\top \boldsymbol\varepsilon
-
\sqrt n\, \mathbf M_n^{-1} (n^{-1}\mathbf R_\lambda)\mathbf b_0.
\]

The second term represents the penalty bias.  Under (A2), the eigenvalues of $\mathbf R_\lambda$ scale as $O(\lambda_{j,n}K_z^{2m-1})$, while (A7)
implies $\|\mathbf M_n^{-1}\| = O_P(1)$.
Hence
\[
\left\|
\sqrt n\,\mathbf M_n^{-1}(n^{-1}\mathbf R_\lambda)\mathbf b_0
\right\|
=
O_P\!\bigl(\sqrt n\,\lambda_{j,n} K_z^{2m-1}\bigr),
\]
which converges to zero under (U2)–(U3).
Thus the penalty bias is asymptotically negligible.

To control the leading term, observe that
\[
\mathbf M_n^{-1} - \mathbf G_n^{-1}
=
-\mathbf G_n^{-1}(n^{-1}\mathbf R_\lambda)\mathbf M_n^{-1}.
\]
It follows that
\[
\|\mathbf M_n^{-1} - \mathbf G_n^{-1}\|
=
O_P\!\bigl(n^{-1}\lambda_{j,n}K_z^{2m-1}\bigr).
\]
Multiplying this bound by the score vector
\(
\frac{1}{\sqrt n}\mathbf Z^\top \boldsymbol\varepsilon
\),
whose norm is $O_P(\sqrt{K_z})$ under (A5),
and using (U1)–(U3), we obtain
\[
(\mathbf M_n^{-1} - \mathbf G_n^{-1})
\frac{1}{\sqrt n}\mathbf Z^\top \boldsymbol\varepsilon
=
o_P(1).
\]
Consequently,
\[
\sqrt n(\widehat{\mathbf b}_n - \mathbf b_0)
=
\mathbf G_n^{-1}
\frac{1}{\sqrt n}\mathbf Z^\top \boldsymbol\varepsilon
+
o_P(1).
\]

Under (A7), the eigenvalues of $\mathbf G_n$ are uniformly bounded away
from zero and infinity.  Moreover, by Lemma A.3 of Cardot et al.\ (2003),
\[
\|\mathbf G_n - \mathbf G\|_{\mathrm{op}}
=
O_P(\sqrt{K_z/n}),
\]
where $\mathbf G$ is the matrix representation of the covariance operator
$\Gamma$ in the spline basis.  Since $K_z = o(n^{1/2})$ by (U1),
\[
\|\mathbf G_n^{-1} - \mathbf G^{-1}\|_{\mathrm{op}}
=
O_P(\sqrt{K_z/n})
=
o_P(1).
\]
Therefore,
\[
\sqrt n(\widehat{\mathbf b}_n - \mathbf b_0)
=
\mathbf G^{-1}
\frac{1}{\sqrt n}\sum_{i=1}^n \mathbf z_i \varepsilon_i
+
o_P(1).
\]

Let $x\in L^2(\mathcal T)^p$ be fixed with spline coefficient vector
$\mathbf w_x$.  Since the projection bias is $o(n^{-1/2})$,
\[
\sqrt n(\widehat\Psi_n(x) - \Psi(x))
=
\frac{1}{\sqrt n}\sum_{i=1}^n
\mathbf w_x^\top \mathbf G^{-1}\mathbf z_i \varepsilon_i
+
o_P(1).
\]

Define
\[
\eta_i
=
\mathbf w_x^\top \mathbf G^{-1}\mathbf z_i \varepsilon_i.
\]
By (A5), the $\eta_i$ are independent with mean zero and finite fourth moments.
Their variance is
\[
\mathbb E[\eta_i^2]
=
\sigma^2 \mathbf w_x^\top \mathbf G^{-1} \mathbf w_x.
\]
Hence, by the Lindeberg–Lévy central limit theorem,
\[
\frac{1}{\sqrt n}\sum_{i=1}^n \eta_i
\xrightarrow{d}
\mathcal N
\bigl(
0,
\sigma^2 \mathbf w_x^\top \mathbf G^{-1} \mathbf w_x
\bigr).
\]

It remains to verify consistent variance estimation.
Under (A5) and (U1)–(U3), standard ridge regression arguments yield
\[
\widehat\sigma_n^2 \xrightarrow{\mathbb P} \sigma^2,
\qquad
\mathbf V_n
=
(\mathbf Z^\top\mathbf Z + \mathbf R_\lambda)^{-1}
\mathbf Z^\top\mathbf Z
(\mathbf Z^\top\mathbf Z + \mathbf R_\lambda)^{-1}
=
\mathbf G_n^{-1} + o_P(1),
\]
while $\operatorname{tr}(\mathbf S_n) = O_P(pK_z) = o_P(n)$.
Since $\mathbf w_x^\top \mathbf G^{-1} \mathbf w_x > 0$ by (A9),
Slutsky’s theorem gives
\[
\frac{\widehat\Psi_n(x) - \Psi(x)}{\sqrt{\widehat V_n(x)}}
\xrightarrow{d}
\mathcal N(0,1),
\]
with $\widehat V_n(x) = \widehat\sigma_n^2\,\mathbf w_x^\top\mathbf V_n\mathbf w_x$.
This completes the proof.
\end{proof}

\section{Bias--Variance Decomposition for Ridge Estimators}
\label{app:ridge_decomp}

This appendix derives the exact integrated mean squared error (IMSE) of a generic functional ridge estimator
in the finite‑dimensional spline basis representation, measured in the \(L^2(\mathcal T)\) norm.
The result is standard and is included only for completeness. \\ \\

Let
\[
\widehat{\mathbf b}
=
(\mathbf Z^\top\mathbf Z+\lambda\mathbf R)^{-1}\mathbf Z^\top\mathbf y,
\qquad
\mathbf y=\mathbf Z\mathbf b+\boldsymbol\epsilon,
\]
where $\E[\boldsymbol\epsilon]=\mathbf 0$ and $\V(\boldsymbol\epsilon)=\sigma^2\mathbf I_n$.
Assume $\beta(s)=\boldsymbol\psi(s)^\top\mathbf b$ for a spline basis
$\{\psi_k\}_{k=1}^{K_z}$ spanning a finite-dimensional space $\mathcal S_{K_z}\subset L^2(\mathcal T)$,
and define the Gram matrix
\[
\mathbf G
=
\int_{\mathcal T}\boldsymbol\psi(s)\boldsymbol\psi(s)^\top\,ds,
\]
so that
$
\|\widehat\beta-\beta\|_{L^2}^2
=
\|\widehat{\mathbf b}-\mathbf b\|_{\mathbf G}^2
$
with $\|\mathbf v\|_{\mathbf G}^2=\mathbf v^\top\mathbf G\mathbf v$.

\begin{proposition}[IMSE decomposition]
\label{prop:imse_general}
Under the above setup,
\[
\begin{aligned}
\mathrm{IMSE}(\widehat{\mathbf b})
&:=
\E\!\left[\|\widehat{\mathbf b}-\mathbf b\|_{\mathbf G}^2\right] \\
&=
\bigl\|
\bigl[(\mathbf Z^\top\mathbf Z+\lambda\mathbf R)^{-1}\mathbf Z^\top\mathbf Z-\mathbf I\bigr]\mathbf b
\bigr\|_{\mathbf G}^2
+\sigma^2\,\mathrm{tr}  \!\left(
\mathbf G(\mathbf Z^\top\mathbf Z+\lambda\mathbf R)^{-1}\mathbf Z^\top\mathbf Z
(\mathbf Z^\top\mathbf Z+\lambda\mathbf R)^{-1}
\right).
\end{aligned}
\]
\end{proposition}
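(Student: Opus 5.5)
The plan is the standard bias--variance split of a linear estimator, carried out conditionally on \(\mathbf Z\). Treat \(\mathbf Z\) as fixed, or equivalently read \(\E\) as conditional on \(\mathbf Z\). This is the reading under which the stated right-hand side is a deterministic expression.

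Write \(\mathbf A_\lambda=(\mathbf Z^\top\mathbf Z+\lambda\mathbf R)^{-1}\mathbf Z^\top\). This is well defined because \(\mathbf Z^\top\mathbf Z+\lambda\mathbf R\) is positive definite, as discussed in Section~2. Substituting \(\mathbf y=\mathbf Z\mathbf b+\boldsymbol\epsilon\) gives
\[
\widehat{\mathbf b}-\mathbf b=\bigl[\mathbf A_\lambda\mathbf Z-\mathbf I\bigr]\mathbf b+\mathbf A_\lambda\boldsymbol\epsilon=:\mathbf d+\mathbf A_\lambda\boldsymbol\epsilon .
\]
Here \(\mathbf d\) is nonrandom and equals \(\E[\widehat{\mathbf b}]-\mathbf b\), since \(\E[\boldsymbol\epsilon]=\mathbf 0\).

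Next, expand the \(\mathbf G\)-norm:
\[
\|\widehat{\mathbf b}-\mathbf b\|_{\mathbf G}^2=\mathbf d^\top\mathbf G\mathbf d+2\,\mathbf d^\top\mathbf G\mathbf A_\lambda\boldsymbol\epsilon+\boldsymbol\epsilon^\top\mathbf A_\lambda^\top\mathbf G\mathbf A_\lambda\boldsymbol\epsilon .
\]
Take expectations term by term. The cross term vanishes because it is linear in \(\boldsymbol\epsilon\) with a deterministic coefficient. For the quadratic term, use the trace identity \(\E[\boldsymbol\epsilon^\top\mathbf B\boldsymbol\epsilon]=\mathrm{tr}(\mathbf B\,\V(\boldsymbol\epsilon))\). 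This requires only \(\E\boldsymbol\epsilon=\mathbf 0\) and the stated covariance, not normality. It gives
\[
\E[\boldsymbol\epsilon^\top\mathbf A_\lambda^\top\mathbf G\mathbf A_\lambda\boldsymbol\epsilon]=\sigma^2\,\mathrm{tr}(\mathbf A_\lambda^\top\mathbf G\mathbf A_\lambda)=\sigma^2\,\mathrm{tr}(\mathbf G\mathbf A_\lambda\mathbf A_\lambda^\top),
\]
where the last step is the cyclic property of the trace.

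Finally, compute \(\mathbf A_\lambda\mathbf A_\lambda^\top\). Since \(\mathbf Z^\top\mathbf Z+\lambda\mathbf R\) is symmetric, so is its inverse, and therefore
\[
\mathbf A_\lambda\mathbf A_\lambda^\top=(\mathbf Z^\top\mathbf Z+\lambda\mathbf R)^{-1}\mathbf Z^\top\mathbf Z(\mathbf Z^\top\mathbf Z+\lambda\mathbf R)^{-1}.
\]
This yields the variance term exactly as stated. The bias term \(\|\mathbf d\|_{\mathbf G}^2\) already has the displayed form.

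No step is a genuine obstacle. The only point needing care is the conditioning convention: if \(\mathbf Z\) were random, the identity holds conditionally on \(\mathbf Z\), and the unconditional statement follows by iterated expectation of that same expression. I would state this explicitly at the outset.
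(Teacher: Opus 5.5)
Your proposal is correct and follows essentially the same argument as the paper: write $\widehat{\mathbf b}-\mathbf b=(\mathbf H-\mathbf I)\mathbf b+\mathbf A\boldsymbol\epsilon$, expand the $\mathbf G$-norm, drop the cross term because $\E[\boldsymbol\epsilon]=\mathbf 0$, and evaluate the quadratic term with the trace identity and cyclicity. You also make explicit two points the paper leaves implicit: the expectation is conditional on $\mathbf Z$, and $(\mathbf Z^\top\mathbf Z+\lambda\mathbf R)^{-1}$ is symmetric, which is needed to get the stated form of $\mathbf A\mathbf A^\top$.
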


\begin{proof}
Set
\[
\mathbf A=(\mathbf Z^\top\mathbf Z+\lambda\mathbf R)^{-1}\mathbf Z^\top,
\qquad
\mathbf H=\mathbf A\mathbf Z=(\mathbf Z^\top\mathbf Z+\lambda\mathbf R)^{-1}\mathbf Z^\top\mathbf Z.
\]
Then
\[
\widehat{\mathbf b}
=
\mathbf A(\mathbf Z\mathbf b+\boldsymbol\epsilon)
=
\mathbf H\mathbf b+\mathbf A\boldsymbol\epsilon,
\qquad
\widehat{\mathbf b}-\mathbf b
=
(\mathbf H-\mathbf I)\mathbf b+\mathbf A\boldsymbol\epsilon.
\]
Using $\|\mathbf v\|_{\mathbf G}^2=\mathbf v^\top\mathbf G\mathbf v$ and expanding,
\[
\E\!\left[\|\widehat{\mathbf b}-\mathbf b\|_{\mathbf G}^2\right]
=
\|(\mathbf H-\mathbf I)\mathbf b\|_{\mathbf G}^2
+2\,\E\!\left[\boldsymbol\epsilon^\top\mathbf A^\top\mathbf G(\mathbf H-\mathbf I)\mathbf b\right]
+\E\!\left[\boldsymbol\epsilon^\top\mathbf A^\top\mathbf G\mathbf A\boldsymbol\epsilon\right].
\]
The cross term is zero because $\E[\boldsymbol\epsilon]=\mathbf 0$.
Moreover, since $\mathbf A^\top\mathbf G\mathbf A$ is symmetric,
\[
\E\!\left[\boldsymbol\epsilon^\top\mathbf A^\top\mathbf G\mathbf A\boldsymbol\epsilon\right]
=
\mathrm{tr}\!\left(\mathbf A^\top\mathbf G\mathbf A\,\V(\boldsymbol\epsilon)\right)
=
\sigma^2\,\mathrm{tr}\left(\mathbf A^\top\mathbf G\mathbf A\right)
=
\sigma^2\,\mathrm{tr}\left(\mathbf G\mathbf A\mathbf A^\top\right),
\]
where the last step uses cyclicity of the trace. Finally,
\[
\mathbf A\mathbf A^\top
=
(\mathbf Z^\top\mathbf Z+\lambda\mathbf R)^{-1}\mathbf Z^\top\mathbf Z
(\mathbf Z^\top\mathbf Z+\lambda\mathbf R)^{-1},
\]
which gives the stated variance term. Combining the two parts yields the claim.
\end{proof}

\begin{remark}
Although Proposition~\ref{prop:imse_general} is formulated for the generic functional ridge estimator (FRE), it provides the structural decomposition underlying the convergence behavior of all three estimators. 
For the FRSM, the same bias–variance trade-off applies on the restricted design. However, if $\boldsymbol{\beta}_2 \neq \mathbf{0}$, the estimator incurs an additional bias arising from the omission of nonzero coefficients in the excluded block.For the FRFM, the use of differential penalization ($\lambda_1 \neq \lambda_2$) induces non-uniform shrinkage across blocks. This allows the relevant component $\widehat{\boldsymbol{\beta}}_1$ to attain the optimal rate while the nuisance component $\widehat{\boldsymbol{\beta}}_2$ is increasingly attenuated as $\lambda_2$ grows. The resulting convergence behavior reflects the block-adaptive structure of the estimator.
\end{remark}

\newpage

\begin{landscape}
	\begin{table}[p]
		\centering
		\caption{\textbf{IMSE values for FRE, FRFM, and FRSM under different sample sizes ($n$), correlation levels ($\rho$), and error variances ($\sigma^2$) for $p=20$.}}
		\label{tab:IMSE_all_n_p20}
		
		\renewcommand{\arraystretch}{2.00}
		\setlength{\tabcolsep}{10pt}
		
		\resizebox{\linewidth}{!}{%
			\begin{tabular}{llccc ccc ccc}
				\hline
				& & \multicolumn{3}{c}{$\sigma^2=0.5$}
				& \multicolumn{3}{c}{$\sigma^2=1.0$}
				& \multicolumn{3}{c}{$\sigma^2=10.0$} \\
				\cline{3-5}\cline{6-8}\cline{9-11}
				$n$ & Model
				& $\rho=0.5$ & $\rho=0.8$ & $\rho=0.99$
				& $\rho=0.5$ & $\rho=0.8$ & $\rho=0.99$
				& $\rho=0.5$ & $\rho=0.8$ & $\rho=0.99$ \\
				\hline
				
				\multirow{3}{*}{25}
				& FRE  & 11.8457 & 7.8509 & 5.3408 & 11.9544 & 7.6832 & 5.6552 & 12.9948 & 8.5898 & 11.3928 \\
				& FRFM & 14.0575 & 9.7509 & 5.4721 & 14.5222 & 9.9860 & 5.4902 & 14.3471 & 10.3437 & 7.6090 \\
				& FRSM & 5.8306 & 4.9845 & 5.0164 & 6.1386 & 5.0165 & 5.4476 & 6.3285 & 5.4490 & 11.3139 \\
				\hline
				
				\multirow{3}{*}{50}
				& FRE  & 14.4963 & 10.2394 & 10.9963 & 14.5682 & 10.5497 & 11.3129 & 15.8303 & 11.0590 & 26.8397 \\
				& FRFM & 7.1173 & 4.2765 & 8.5118 & 7.4738 & 4.5225 & 8.6006 & 8.4176 & 5.5391 & 16.2439 \\
				& FRSM & 3.9229 & 3.8081 & 4.1809 & 3.8537 & 3.8577 & 4.2685 & 4.0956 & 3.9237 & 6.0330 \\
				\hline
				
				\multirow{3}{*}{100}
				& FRE  & 3.4604 & 3.3245 & 4.4191 & 3.4348 & 3.3619 & 4.5896 & 3.5929 & 3.5127 & 6.9624 \\
				& FRFM & 0.2213 & 0.1721 & 2.1756 & 0.2249 & 0.1903 & 2.2547 & 0.4324 & 0.3503 & 3.3225 \\
				& FRSM & 2.9688 & 3.2259 & 3.8497 & 2.9935 & 3.2276 & 3.8962 & 3.0366 & 3.2709 & 4.7371 \\
				\hline
				
				\multicolumn{11}{l}{}
			\end{tabular}
		}
		
	\end{table}
\end{landscape}

\begin{landscape}
	\begin{table}[p]
		\centering
		\caption{\textbf{IMSE values for FRE, FRFM, and FRSM under different sample sizes ($n$), correlation levels ($\rho$), and error variances ($\sigma^2$) for $p=30$.}}
		\label{tab:IMSE_all_n_p30}
		
		\renewcommand{\arraystretch}{2.00}
		\setlength{\tabcolsep}{10pt}
		
		\resizebox{\linewidth}{!}{%
			\begin{tabular}{llccc ccc ccc}
				\hline
				& & \multicolumn{3}{c}{$\sigma^2=0.5$}
				& \multicolumn{3}{c}{$\sigma^2=1.0$}
				& \multicolumn{3}{c}{$\sigma^2=10.0$} \\
				\cline{3-5}\cline{6-8}\cline{9-11}
				$n$ & Model
				& $\rho=0.5$ & $\rho=0.8$ & $\rho=0.99$
				& $\rho=0.5$ & $\rho=0.8$ & $\rho=0.99$
				& $\rho=0.5$ & $\rho=0.8$ & $\rho=0.99$ \\
				\hline
				
				\multirow{3}{*}{25}
				& FRE  & 10.9898 & 8.3352 & 4.4314 & 11.3015 & 8.3615 & 4.5676 & 11.4738 & 8.4687 & 5.9121 \\
				& FRFM & 11.7681 & 9.3760 & 4.2261 & 11.7961 & 9.4624 & 4.2339 & 11.8056 & 9.3228 & 4.7384 \\
				& FRSM & 5.9769 & 4.8995 & 5.1457 & 6.1727 & 5.0064 & 5.4841 & 6.1796 & 5.3415 & 10.7117 \\
				\hline
				
				\multirow{3}{*}{50}
				& FRE  & 20.5252 & 12.1676 & 10.7730 & 21.1643 & 12.2557 & 11.2570 & 21.8487 & 15.2034 & 31.6025 \\
				& FRFM & 23.0834 & 14.3137 & 11.2794 & 23.1704 & 13.8954 & 10.5602 & 23.0631 & 15.3500 & 17.8673 \\
				& FRSM & 3.9097 & 3.7618 & 4.1953 & 3.9025 & 3.8022 & 4.2032 & 4.0648 & 3.9492 & 6.2973 \\
				\hline
				
				\multirow{3}{*}{100}
				& FRE  & 5.6281 & 4.7645 & 5.7804 & 5.8281 & 4.8407 & 5.9708 & 6.0989 & 5.3555 & 11.9878 \\
				& FRFM & 0.7302 & 0.5659 & 3.4944 & 0.8360 & 0.5631 & 3.6448 & 1.2253 & 0.8719 & 6.0269 \\
				& FRSM & 3.0045 & 3.2304 & 3.8421 & 3.0143 & 3.2424 & 3.8865 & 3.0440 & 3.3152 & 4.8042 \\
				\hline
				
				\multicolumn{11}{l}{}
			\end{tabular}
		}
		
	\end{table}
\end{landscape}
\end{document}